\documentclass[final,leqno,letterpaper]{article}
\usepackage{fullpage}
\usepackage{amsthm}
\usepackage{graphicx,amsmath,amsfonts}
\usepackage{amssymb}
\usepackage{hyperref}
\usepackage{subcaption}
\newtheorem{definition}{Definition}
\newtheorem{proposition}{Proposition}
\usepackage{tikz}
\usepackage{tikz-cd}
\usepackage{quiver}
\usepackage{graphicx}
\usepackage{soul}
\usepackage{cleveref}
\usepackage{lineno}

\usepackage[inline]{asymptote}
\usepackage{booktabs}
\newcommand{\uflat}{\mathbf{u}^\flat}
\newcommand{\extd}{\mathbf{d}}
\newcommand{\extcodif}{\mathbf{\delta}}
\newcommand{\Lie}{\mathcal{L}}
\newcommand{\sds}{\star \, \extd \, \! \star}

\usetikzlibrary{matrix,arrows}

\setcounter{MaxMatrixCols}{20}

\hypersetup{%
    pdftitle={Porous Convection in the Discrete Exterior Calculus with Geometric Multigrid},
    pdfauthor={Luke Morris, George Rauta, Kevin Carlson, James Fairbanks},
    pdfkeywords={discrete exterior calculus, porous convection, and geometric multigrid}
    }

\title{Porous Convection in the Discrete Exterior Calculus with Geometric Multigrid\thanks{
Received... Accepted... Published online on... Recommended by....
}}

\author{Luke Morris\footnotemark[2], George Rauta\footnotemark[2], Kevin Carlson\footnotemark[3], and James Fairbanks\footnotemark[2]}

\begin{document}

\maketitle

\renewcommand{\thefootnote}{\fnsymbol{footnote}}

\footnotetext[2]{Department of Computer and Information Science and Engineering, University of Florida, Gainesville, FL 32611 USA (luke.morris@ufl.edu, grauta@ufl.edu, fairbanksj@ufl.edu)}
\footnotetext[3]{Topos Institute, Berkeley, CA 94704 USA (kevin@topos.institute)}

\begin{abstract}
The discrete exterior calculus (DEC) defines a family of discretized differential operators which preserve certain desirable properties from the exterior calculus. We formulate and solve the porous convection equations in the DEC via the Decapodes.jl embedded domain-specific language (eDSL) for multiphysics problems discretized via CombinatorialSpaces.jl. CombinatorialSpaces.jl is an open-source Julia library which implements the DEC over simplicial complexes, and now offers a geometric multigrid solver over maps between subdivided simplicial complexes. We demonstrate numerical results of multigrid solvers for the Poisson problem and porous convection problem, both as a standalone solver and as a preconditioner for open-source Julia iterative methods libraries.
\end{abstract}

\section{Introduction}

The discrete exterior calculus (DEC) is a discretization of the exterior calculus which is similar to the finite volume method (FVM) over a primal-dual mesh setup. Given a mesh, the DEC provides constructions of discrete differential operators as matrix-vector and kernel-based operations. These operators and the spaces they act on can be visualized in a diagram known as the de Rham complex, as shown in Figure \ref{fig:dRC}. The DEC is valued for its structure-preserving (mimetic) qualities, chiefly that the curl of the gradient is 0 is enforced at the discrete level i.e., $\extd \extd = 0$ in exterior calculus notation. The DEC was studied in its own right in the work of Hirani~\cite{hirani_discrete_2003}, Desbrun~\cite{desbrun_discrete_2005}, Leok~\cite{leok_foundations_2004}, and others, and the study of discrete differential forms\footnote[7]{For introductory purposes, discrete differential forms are simply functions evaluated over the points, edges or triangles, of a mesh.} can be recognized in particular cases going back to at least 1977 in the context of the Finite Integration Technique~\cite{clemens_discrete_2001, weiland_discretization_1977}. Further, the DEC itself can be found in the development of TRiSK-type schemes~\cite{eldred_interpretation_2022}. These instances of parallel development are anecdotal of the expressive power of discrete versions of the exterior calculus as a unifying theoretical framework for model development in the applied sciences. The exterior calculus has also been found useful in formalizing the finite element method via the finite element exterior calculus (FEEC). FEEC was originally explored in the work of Arnold, Falk, and Winther in myriad works~\cite{arnold_finite_2006, arnold_differential_2006, arnold_finite_2010}, and various efforts on improving the FEEC framework have taken place including $hp$-refinement~\cite{gates_hp-hierarchical_2021} and hybridized methods~\cite{awanou_hybridization_2023}. Improvements to the generic DEC framework are plentiful since its introduction in the early 2000's. These improvements include, for example, the introduction of upwinding methods~\cite{heumann_extrusion_2008, mullen_discrete_2011} and adjustments for simulations over bounded domains~\cite{eldred_structure-preserving_2021}. Most relevant to the present work are the application of algebraic multigrid (AMG) to the DEC in the work of Bell et al.~\cite{bell_algebraic_2008-1, bell_algebraic_2008}, and the ``subdivision exterior calculus" of de Goes et al.~\cite{de_goes_subdivision_2016}, which interprets subdivision surfaces from a DEC perspective. Also relevant is the unpublished \texttt{pycomplex} package which implements multigrid restriction and prolongation operations~\cite{hoogendoorn_eelcohoogendoornpycomplex_2025} over spheres and regular grids. Outside of the field of exterior calculus, the approach to be actualized by the simulations in this current paper is most similar to nested-mesh subdivision schemes~\cite{mavriplis_mutigrid_1995}, though the geometric maps developed in this paper are designed to capture more general maps between domains. In terms of particular (multi)-physics simulations, the DEC has been applied to the incompressible Navier-Stokes equations~\cite{mohamed_discrete_2016}, ideal magnetohydrodynamics (MHD) equations~\cite{kraus_variational_2018}, and rotating shallow water equations~\cite{eldred_interpretation_2022} to name a few.

\begin{figure}[hbtp]
    \centering
    \scalebox{1.3}{\begin{tikzcd}
	{\Omega_0} && {\Omega_1} && {\Omega_2} \\
	\\
	{\tilde\Omega_{2}} && {\tilde\Omega_{1}} && {\tilde\Omega_{0}}
	\arrow["d_0", from=1-1, to=1-3]
	\arrow["d_1", from=1-3, to=1-5]\
	\arrow["\star_0"', shift right=2, from=1-1, to=3-1]
	\arrow["\star_1"', shift right=2, from=1-3, to=3-3]
	\arrow["\star_2"', shift right=2, from=1-5, to=3-5]
	\arrow["{\tilde d_0}", from=3-5, to=3-3]
	\arrow["{\tilde d_1}", from=3-3, to=3-1]
	\arrow["{\star_0^{-1}}"', shift right=2, from=3-1, to=1-1]
	\arrow["{\star_1^{-1}}"', shift right=1, from=3-3, to=1-3]
	\arrow["{\star_2^{-1}}"', shift right=1, from=3-5, to=1-5]
\end{tikzcd}}
    \caption{The 2D de Rham complex relates differential forms on a primal (top row) and dual (bottom row) mesh using the exterior derivative operation and the Hodge star operation. For example, $d_0$ takes primal 0-forms to primal 1-forms, and $\star_0$ takes primal 0-forms to dual 2-forms. The qualitative descriptions of these operations can be found in Section 3.}
    \label{fig:dRC}
\end{figure}

\begin{figure}[hbtp]
    \centering
    \scalebox{0.85}{\begin{tikzcd}
	{\Omega(X_0)} && {\Omega(X_1)} && {\Omega(X_2)} \\
	\\
	{\tilde{\Omega}(X_2)} && {\tilde{\Omega}(X_1)} && {\tilde{\Omega}(X_0)} \\
	& {\Omega(X_0)} && {\Omega(X_1)} && {\Omega(X_2)} \\
	\\
	& {\tilde{\Omega}(X_2)} && {\tilde{\Omega}(X_1)} && {\tilde{\Omega}(X_0)} \\
	&& {\Omega(X_0)} && {\Omega(X_1)} && {\Omega(X_2)} \\
	\\
	&& {\tilde{\Omega}(X_2)} && {\tilde{\Omega}(X_1)} && {\tilde{\Omega}(X_0)}
	\arrow["{d_0}", from=1-1, to=1-3]
	\arrow["{\star_0}"', from=1-1, to=3-1]
	\arrow["int_0"{description, pos=0.4}, curve={height=-6pt}, dashed, from=1-1, to=4-2]
	\arrow["{d_1}", from=1-3, to=1-5]
	\arrow["{\star_1}"', from=1-3, to=3-3]
	\arrow["int_1"{description, pos=0.4}, curve={height=-6pt}, dashed, from=1-3, to=4-4]
	\arrow["{\star_2}"', from=1-5, to=3-5]
	\arrow["int_2"{description, pos=0.4}, curve={height=-6pt}, dashed, from=1-5, to=4-6]
	\arrow["{\star_0^{-1}}", curve={height=-12pt}, from=3-1, to=1-1]
	\arrow["{\star_1^{-1}}", curve={height=-12pt}, from=3-3, to=1-3]
	\arrow["{\tilde{d}_1}", from=3-3, to=3-1]
	\arrow["{\star_2^{-1}}", curve={height=-12pt}, from=3-5, to=1-5]
	\arrow["{\tilde{d}_0}", from=3-5, to=3-3]
	\arrow["res_0"{description}, curve={height=-6pt}, dashed, from=4-2, to=1-1]
	\arrow["{d_0}", from=4-2, to=4-4]
	\arrow["{\star_0}"', from=4-2, to=6-2]
	\arrow["int_0"{description, pos=0.4}, curve={height=-6pt}, dashed, from=4-2, to=7-3]
	\arrow["res_1"{description}, curve={height=-6pt}, dashed, from=4-4, to=1-3]
	\arrow["{d_1}", from=4-4, to=4-6]
	\arrow["{\star_1}"', from=4-4, to=6-4]
	\arrow["int_1"{description, pos=0.4}, curve={height=-6pt}, dashed, from=4-4, to=7-5]
	\arrow["res_2"{description}, curve={height=-6pt}, dashed, from=4-6, to=1-5]
	\arrow["{\star_2}"', from=4-6, to=6-6]
	\arrow["int_2"{description, pos=0.4}, curve={height=-6pt}, dashed, from=4-6, to=7-7]
	\arrow["{\star_0^{-1}}", curve={height=-12pt}, from=6-2, to=4-2]
	\arrow["{\star_1^{-1}}", curve={height=-12pt}, from=6-4, to=4-4]
	\arrow["{\tilde{d}_1}", from=6-4, to=6-2]
	\arrow["{\star_2^{-1}}", curve={height=-12pt}, from=6-6, to=4-6]
	\arrow["{\tilde{d}_0}", from=6-6, to=6-4]
	\arrow["res_0"{description}, curve={height=-6pt}, dashed, from=7-3, to=4-2]
	\arrow["{d_0}", from=7-3, to=7-5]
	\arrow["{\star_0}"', curve={height=-6pt}, from=7-3, to=9-3]
	\arrow["res_1"{description}, curve={height=-6pt}, dashed, from=7-5, to=4-4]
	\arrow["{d_1}", from=7-5, to=7-7]
	\arrow["{\star_1}"', from=7-5, to=9-5]
	\arrow["res_2"{description}, curve={height=-6pt}, dashed, from=7-7, to=4-6]
	\arrow["{\star_2}"', from=7-7, to=9-7]
	\arrow["{\star_0^{-1}}", curve={height=-6pt}, from=9-3, to=7-3]
	\arrow["{\star_1^{-1}}", curve={height=-12pt}, from=9-5, to=7-5]
	\arrow["{\tilde{d}_1}", from=9-5, to=9-3]
	\arrow["{\star_2^{-1}}", curve={height=-12pt}, from=9-7, to=7-7]
	\arrow["{\tilde{d}_0}", from=9-7, to=9-5]
\end{tikzcd}}
    \caption{The 2D de Rham complex of geometrically related spaces are connected by interpolation and restriction maps between a (primal) domain and two successive refinements of that domain - one leg of a V-cycle. Note that only maps between $\Omega(X_0)$ (primal 0-forms) are necessary and implemented for solves of the usual Poisson problem.}
    \label{fig:dRC_Stratified}
\end{figure}

The abstract perspective on space and shape taught by algebraic topology has found application in fields of statistics, data science, and machine learning through topological data analysis (TDA)~\cite{dey_topological_2022}. In the TDA paradigm, data sets are represented as point clouds in Euclidean space and then converted to topological spaces as the geometric realizations of simplicial complexes. Because the spatial scale of data analysis is a free parameter, simplicial complex filtrations such as Vietoris-Rips or Cech complexes are used to create a sequence of simplicial complexes that approximate the data. As the spatial resolution changes, these complexes fill in with simplices. The tools of algebraic topology are used to understand changes in the shape of the data as you change the spatial resolution of your analysis. By focusing on the topology and geometry of simplicial complexes, TDA research has developed powerful tools for data analysis~\cite{nanda_computational_2021}.

In the DEC, a discrete de Rham complex captures the space of differential forms over a discrete space. In this setting we are looking at piecewise linear differential forms over the geometric realization of a simplicial complex. The relevant differential operators of exterior calculus have been discretized and thus physics modeling can be done entirely over the discrete domain without reference to the continuous quantities that are being approximated. The musical operators $\sharp$ and $\flat$ transfer data between discrete differential forms and discrete vector fields, allowing DEC models to integrate with traditional vector calculus modeling tools. A benefit of the DEC is that continuous models described in the language of exterior calculus can be directly implemented with a canonical discrete approximation. This canonical discretization can be implemented algorithmically, making it possible to automate the construction of complex simulators directly from a mathematical description of the model~\cite{morris_decapodes_2024}. Because of how the discrete operators are defined relative to a simplicial mesh, DEC implementations usually track conserved quantities -- including mass, momentum, and energy -- to machine precision without special effort~\cite{wang_discrete_2023,mohamed_discrete_2016}.  

In this paper, we take the simplicial complex perspective and develop a general geometric multigrid (GMG) formulation based on barycentric coordinates and integrate it with a DEC-based simulation of complex fluid dynamics on general triangulated meshes. We are interested in directly encoding the V-cycles, W-cycles, and so on from geometric multigrid in the DEC, keeping in mind that multigrid methods on unstructured meshes have been studied independently of this framework~\cite{mavriplis_mutigrid_1995}. This is in contrast to the work of subdivision exterior calculus~\cite{de_goes_subdivision_2016}, for example, in which similarities to multigrid are found post facto. For the present study, we restrict ourselves to familiar scalar fields (discrete differential 0-forms). We are interested in using geometric multigrid methods to improve existing DEC simulation frameworks for practical computational fluid dynamics (CFD) problems. So, we restrict ourselves to the study of the maps between primal 0-forms (scalar fields) between refinements of a domain, since this enables us to improve extant simulations of the porous convection equations as in Section 3\footnote{Although the vector-Poisson problem, for example, may be solved by using multigrid methods directly on higher-dimensional differential forms, such efforts are tangential to typical fluid flow problems.}.
To the best of our knowledge, this particular benchmark is a contribution in itself, this being the first implementation of such a porous convection simulation in the discrete exterior calculus.
The restriction and prolongation maps used in multigrid can be neatly formulated in the language of simplicial complexes. We use geometric maps between simplicial complexes to represent the relationships between meshes and define the induced restriction and prolongation operators. Particularly important are the geometric maps that encode the relation between a mesh and a refinement of that mesh. These are the maps necessary to set up a GMG scheme. Although algebraic multigrid methods may demonstrate superior convergence in particular cases~\cite{wu_analysis_2006}, their ``blackbox" nature happens to side-step the geometric maps that we seek for a unifying approach of multiscale, multidomain, and multiphysics models. Further, we believe that an explicit study of GMG in the DEC will complement prior works such as the study of AMG in Bell et al.~\cite{bell_algebraic_2008-1, bell_algebraic_2008}.

A main motivation of this work is to show that an algebraic topology perspective on PDE solvers can be used to derive elegant representations of sophisticated numerical methods. The utility of studying general geometric maps, rather than skipping directly to refinements alone, is to build a framework for using multigrid with multidomain (in the vein of (optimized) Schwarz decomposition methods) and multiphysics models. In order to define a multidomain map, you need to define operators that transfer data between the component domains. These will have the same form as the restriction maps used in our multigrid implementation described below. The general algebraic topology perspective will enable this multigrid method to integrate with multidomain methods in future work. So, we develop here a category theoretic formalism for multiscale methods, to be reused as the basis for future development of multidomain methods.

We start by considering geometric morphisms of simplicial complexes and end with numerical experiments on porous media flow problems which incorporate the resulting multigrid solver. This combination of geometric morphisms between domains and discrete differential operators defined on those domains is shown in Figure \ref{fig:dRC_Stratified}. Here, the de Rham complex associated with each discretization of a domain is shown, and interpolation and restriction maps are created between the primal differential forms. Each node in this diagram represents a space of differential forms of particular degree, and each arrow (morphism) is an operator that acts on this space of differential forms.\footnote{Identity morphisms that map a space back to itself are elided.} We will see in Section \ref{sec:results} that the discrete differential operators of the usual de Rham complex (Figure \ref{fig:dRC}) can be composed to create further discrete differential operators, such as the Hodge Laplacian operator: $\Delta_0 = \star_0^{-1} \tilde{\extd_1} \star_1 \extd_0$. We will show in Section \ref{sec:methodology} that GMG methods can be derived by composing operators from a stratification of the de Rham complex and geometric maps between domains, as in Figure \ref{fig:dRC_Stratified}.

This work builds on prior work towards automation in multiphysics simulation using the DEC and the framework of lifting problems from algebraic topology~\cite{patterson_diagrammatic_2023}. This approach used diagrams, opfibrations, and lifting problems to define a compositional framework for multiphysics problem specification. This paper attempts to take this algebraic topology perspective within the DEC further to develop new approaches to compositional multiscale methods relying on subdivision of simplicial complexes and their actions on cochain complexes that are used to represent physical quantities over a space. The next step will be to use this approach in defining multidomain, multiscale, and multiphysics simulations in the DEC. 

We have released an open source implementation of this work and integrated it into the Decapodes framework for multiphysics simulations in the DEC~\cite{morris_decapodes_2024}. This demonstrates the generality of the approach that the DEC and category theoretic methods enable. These simulations are generic over not only well-formed manifolds, but also their refinements.

The structure of this paper is as follows. Section 1 (this section) explains the motivation for this work in the field of the discrete exterior calculus and the research program of multiscale, multidomain, and multiphysics simulations. Section 2 explains the particular methodology by which we introduce GMG into the DEC in an approach inspired by the category of piecewise linear geometric maps and its translation into typical linear algebra. Section 3 details translations of the Poisson problem and a porous convection problem into the DEC. Both of these physics entail a linear solve, in which we employ the DEC GMG solver of Section 2. In Section 4, we make final conclusions, and acknowledgments are to be found in Section 5.

\subsection{Contributions}

The contributions of this paper are as follows:
\begin{enumerate}
    \item A geometric multigrid method for the discrete exterior calculus on simplicial complexes
    \item A discrete exterior calculus formulation of the porous convection problem
\end{enumerate}

We validate these two contributions with numerical simulations. (1) is validated on both the discrete Poisson problem and on (2). The numerical validation of (2) is based on comparison against a baseline direct method and comparable non-GMG iterative methods.

\section{Multigrid primitives based on geometric maps}
\label{sec:methodology}

Our multigrid implementation aims at the familiar and interpretable behavior of traditional geometric multigrid with a generic interface for arbitrary geometric maps between simplicial meshes. These meshes are stored in a simplicial
set data structure, provided by \texttt{CombinatorialSpaces.jl}, which generates the operators required for a spatially discretized PDE. The information necessary to provide a variety of multigrid methods can be derived from geometric morphisms between simplicial complexes. This includes V-Cycles, W-Cycles, and F-Cycles in any combination. 
Being parameterized by an arbitrary geometric map allows this GMG method to apply to a wide variety of subdivision schemes. This provides flexibility to the implementers of a simulation, while representing geometric information explicitly.

\subsection{Geometric constructions}
In this section, we will answer the questions of how to represent points, how to represent smooth maps, and how to derive operators from these. We will proceed to answer these questions in this order.

\begin{figure}[htbp]
\begin{center}
\includegraphics[width=0.8192\textwidth]{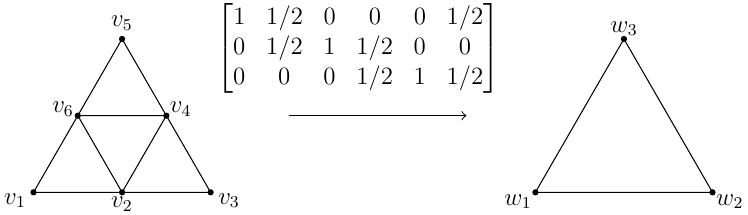}
\end{center}
\caption{An example map from a (finer) subdivision of a mesh to the coarse mesh.}
\label{fig:ex-subdiv}
\end{figure}

A key desideratum for our data structures encoding maps between simplicial
complexes is that they encompass the smooth map from a subdivision of a complex
to the complex itself, as illustrated in Figure \ref{fig:ex-subdiv}. Ordinary maps
of simplicial complexes -- which map vertices to vertices, edges to edges, and so on -- do not permit this flexibility. These would require each 
simplex of the domain to be mapped surjectively onto a simplex of the codomain, which does not capture the information needed to transfer for GMG applications.

Instead, our basic data structure for simplicial complexes may be interpreted as follows.
We give the case for 2-dimensional complexes embedded in 3-dimensional space for explicitness and because it is of the main interest
for our software, but the reader with some experience with simplicial sets will see 
how to generalize to dimension $n.$

\begin{definition}
    A (2-dimensional) \emph{embedded simplicial complex} consists of a \emph{point cloud} 
    $p:V\to \mathbf{R}^3,$ where $V$ is some finite set of \emph{vertices}, together with finite sets
    $E,T$ of edges and triangles and face maps $d_0,d_1:E\to V$ and $d_0,d_1,d_2:T\to E$ such that
    \begin{itemize}
        \item The standard simplicial identities are satisfied for the face maps (so that adjacent edges of each triangle share a common vertex.)
        \item No two simplices in $E$ or in $T$ have the same set of vertices, and no vertex or edge is repeatedly a face of the same higher-dimensional simplex.
        \item Every vertex and edge is in at least one triangle.
    \end{itemize}
\end{definition}

Note that the conditions imposed imply that, mathematically, we could define a simplicial
complex by specifying $V$ together with a downward-closed subset of the powerset $P(V)$ :
in other words, by identifying an $n$-simplex with its $n+1$ vertices; this is the intuition behind the simplex tree data structure.
This mathematical data structure can represent simplicial complexes, but is much less flexible
for generalization and is less closely in keeping with our software.

\begin{figure}[htbp]
    \centering
    \includegraphics[width=0.5\textwidth]{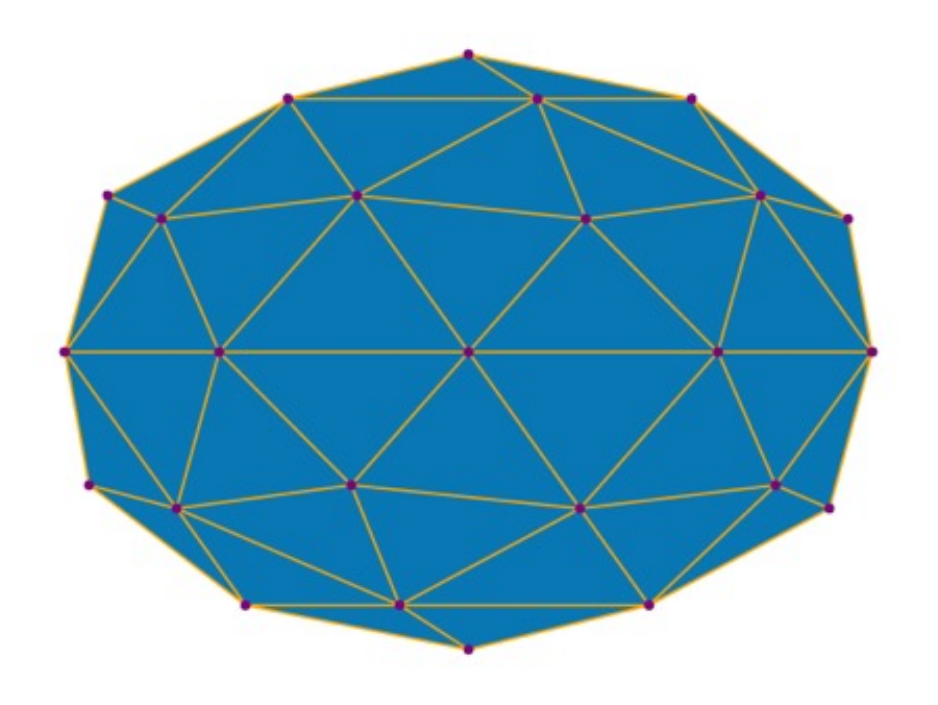}
    \caption{A primal simplicial complex, with primal vertices in purple, primal edges in orange, and primal triangles in blue.}
    \label{fig:methodology-illustration-mesh}
\end{figure}

We implicitly consider the higher-dimensional faces to lie (bi)-linearly interpolated between their vertices, which
allows us to implement the above mathematical definition in a finitary data structure, namely \texttt{EmbeddedDeltaSet2D} 
in \texttt{CombinatorialSpaces.jl}. Figure \ref{fig:methodology-illustration-mesh} illustrates one such delta set. We rely on \texttt{ACSets.jl}~\cite{patterson_categorical_2022} for fast, flexible data structures for
attributed combinatorial objects such as simplicial complexes.
We focus on a particular class of piecewise-linear morphisms of simplicial complexes applicable to our multigrid 
case:

\begin{definition}
    A \emph{geometric map} between embedded simplicial complexes $X,Y$ is a continuous function $f:X\to Y$, affine on each simplex, and such that
    each simplex of $X$ has image entirely contained in some simplex of $Y$.
\end{definition}

The assumption that every positive-dimensional simplex is linearly interpolated among its 
vertices allows us to encode
such a geometric map by simply specifying the image of each vertex of $X$ in $\mathbb{R}^3,$ 
then simply check that, if $\{x_1,\ldots,x_n\}$ are the vertices of some simplex of 
$X$, then the images $f(x_1),\ldots,f(x_n)$ all lie in some single simplex of $Y$. For example, an edge $e$ must have both of its endpoints $\{x_1, x_2\}$ lie within the bounds of the same triangle $t$.

\begin{proposition}
    A geometric map $f:X\to Y$ of embedded simplicial complexes is uniquely determined 
    by its restriction to the set $X_0$ of vertices of $X$, and a function $f_0:X_0\to Y$
    extends to a (unique) geometric map $X\to Y$ if and only if $f_0$ sends the vertices 
    of every simplex of $X$ into some simplex of $Y.$
\end{proposition}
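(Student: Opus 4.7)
The plan is to prove the proposition in two parts, uniqueness and then the existence/characterization, using the barycentric-coordinate description of affine maps on a simplex.

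For uniqueness, I would invoke the defining property that a geometric map is affine on each simplex of $X$. On a simplex $\sigma$ with vertices $v_0,\dots,v_k$, any affine map is determined by its values on the $v_i$ via the expansion $f\bigl(\sum \lambda_i v_i\bigr)=\sum \lambda_i f(v_i)$ in barycentric coordinates. Since every point of $X$ lies in some simplex, two geometric maps agreeing on $X_0$ must agree pointwise. The ``only if'' direction of the characterization is essentially by definition: if $f$ is a geometric map and $\sigma$ is a simplex of $X$, then $f(\sigma)$ is contained in some simplex $\tau$ of $Y$, and in particular the vertices of $\sigma$ lie in $\sigma$ so their images lie in $\tau$.

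For the substantive ``if'' direction, suppose $f_0:X_0\to Y$ sends the vertex set of every simplex of $X$ into some simplex of $Y$. For each simplex $\sigma$ of $X$ with vertices $v_0,\dots,v_k$, I would choose a simplex $\tau_\sigma$ of $Y$ containing $\{f_0(v_0),\dots,f_0(v_k)\}$ and define, for $x=\sum \lambda_i v_i\in\sigma$, the value $f(x)=\sum \lambda_i f_0(v_i)$. By convexity of $\tau_\sigma$, the image lies in $\tau_\sigma$, giving the simplex-containment property. Independence from the choice of $\tau_\sigma$ is immediate, since the formula only uses $f_0$ and the barycentric coordinates.

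The main obstacle — though still routine — is well-definedness across simplices that overlap. Two simplices $\sigma,\sigma'$ of $X$ intersect in a common face $\mu$ (itself a simplex of $X$, by the simplicial complex axioms and the ``no two simplices with the same vertex set'' condition), whose vertex set is $\sigma_0\cap\sigma'_0$. The barycentric formula restricted to $\mu$ uses only $f_0$ restricted to $\mu_0$, so the values agree on $\mu$; this is where the simplicial identities are doing real work. Given well-definedness, continuity then follows because $X$ carries the coherent topology from its finitely many closed simplices and $f$ is continuous (indeed affine, hence polynomial in barycentric coordinates) on each. Affinity on each simplex holds by construction, completing the verification that $f$ is a geometric map extending $f_0$.
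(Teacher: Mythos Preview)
Your proof is correct and follows essentially the same barycentric-coordinate and convexity argument as the paper. If anything, you are more careful than the paper's terse proof: you explicitly address well-definedness on overlapping simplices and continuity via the coherent topology, both of which the paper leaves implicit.
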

\begin{proof}
    Given $f_0:X_0\to Y$ and a point $x\in X,$ we can express $x$ uniquely as a 
    linear combination $x=a^ix_i$ of the vertices of the minimal simplex in which $x$
    lies. (In fact, the $a_i$ are all in $[0,1]$ and sum to $1$: they are the barycentric
    coordinates of $x$ that we shall make more systematic use of in the next section.)
    Therefore the only possible $f$ extending $f_0$ is defined by 
    $f(x)=a^if_0(x_i),$ which establishes uniqueness. Given a simplex of $X$ with vertices 
    $\{x_1,\ldots,x_n\}$, it is clear that the linear combination above lies in some fixed simplex
    of $Y$ for every choice of coefficients if and only if all the $x_i$ do, since simplices
    are convex. 
\end{proof}

Simplicial complexes (in any dimension) equipped with such geometric maps form a category
$\mathsf{GM}$. One composes these maps by composing them as functions. 
The fact that geometric maps between simplicial complexes form a category justifies using diagrams as a rigorous tool for reasoning about relationships between spaces. These diagrams for reasoning about relationships between spaces are rigorous diagrams analogous to commuting diagrams used in abstract algebra. Such diagrams can be used to specify complex multigrid as discussed in the following section.
A primary feature of $\mathsf{GM}$ is that this category can be represented in linear 
algebra using barycentric coordinates and matrix multiplication, as we now describe.

\subsection{Linear algebraic formulation}
In the previous section, we did not describe how geometric maps are actually encoded,
beyond the observation that they are designed to be specified on vertices alone. 
The most obvious approach would be to encode the image of a vertex $x$ of a simplicial
complex $X$ under a geometric map $f:X\to Y$ by its coordinates in $\mathbb R^3$, 
but this would abandon most of the advantages of our data structures, since validation
would require geometric computations.

Instead, we implement an encoding of geometric maps which is independent of the embedding
and cannot express invalid maps, using \emph{barycentric coordinates}. Recall that, 
if $x,y,z$ are three points in a Euclidean space, then the barycentric coordinates 
$[a,b,c]$ with respect to $x,y,z$ refer simply to the linear combination $ax+by+cz.$ 
The significant case is when $a,b,c\ge 0$ and $a+b+c=1,$ for every point in 
the triangle spanned by $x,y,z$ has unique barycentric coordinates of this 
form, as we illustrate below:

\begin{center}
\includegraphics[width=0.5\textwidth]{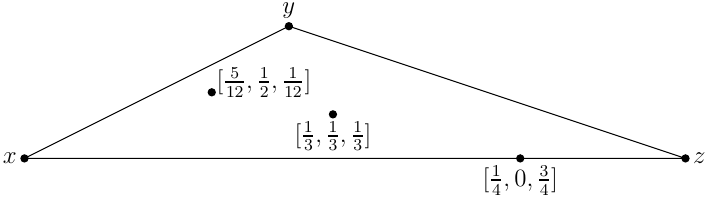}
\end{center}

Mathematically, we can thus encode a geometric map $f:X\to Y$ entirely as a function 
$X_0\to Y_0+Y_1\times \Delta^1_\circ+Y_2\times \Delta^2_\circ,$ (with the obvious modifications 
in different dimensions) where $\Delta^2=\{(a,b,c)\in \mathbb{R}^3\mid a,b,c> 0,a+b+c=1\}$
is the open standard 2-simplex. This function picks for each vertex of $X$ a simplex of $Y$ 
into whose interior it maps, and the barycentric coordinates in that triangle of the intended image. 
As a final step in simplifying our data structures 
we introduce the \emph{global} barycentric coordinates on an 
embedded simplicial complex $X.$ This representation conforms to standard sparse linear algebra
tools for efficient storage.
\begin{definition}
If $X$ is an embedded simplicial complex with $k$ vertices $\vec x=[x_1,\ldots,x_k]$, then a 
global barycentric coordinate is given by an element $\vec{a}\in\mathbb R^k$ such that
$a_i\ge 0,\sum\vec a=1$ and the linear combination $\vec{a}\cdot \vec x$ actually lies in 
$X.$
\end{definition} 

In particular, if $X$ is $2$-dimensional, then in general position at most three of the entries 
of $\vec{a}$ can be nonzero, since a fourth nonzero coordinate would define a point in 
the interior of the tetrahedron spanned by the corresponding four vertices in $X.$ 
The global barycentric coordinates pick out a point
by selecting the simplex $s$ of $X$ corresponding to the nonzero entries of 
$\vec{a},$ then using the ordinary, or local, barycentric coordinates, within $s.$
The important upshot is that we can now define a very convenient encoding of 
geometric maps, which embeds their category into the category of matrices. 

\begin{definition}\label{def:mfunc}
The \emph{matrix} $M(f)$ of a geometric map $f:X\to Y$, where $X$ and $Y$ have 
vertices $\{x_1,\ldots,x_m\}$ and 
$\{y_1,\ldots,y_n\}$, respectively, is the $n\times m$ matrix whose $j$th column is the vector
of global barycentric coordinates of $f(x_j)$ in $Y.$
\end{definition}

This is a good encoding of geometric maps insofar as the composition operations are respected,
which is necessary to use this data structure in the highly compositional multigrid algorithms
that are our target.

\begin{proposition}\label{prop:mfunc}
The operation of taking the matrix defines a functor $M:\mathsf{GM}\to \mathsf{Vect}.$ That is, 
if $X\xrightarrow{f} Y\xrightarrow{g} Z$ is a composable sequence of geometric maps
of simplicial complexes, then $M(g\circ f)=M(g)\cdot M(f),$ while $M(\mathrm{id}_X)=I_n$.
\end{proposition}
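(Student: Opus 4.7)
The plan is to verify both the identity law and the composition law directly from the definition of global barycentric coordinates, with the essential technical input being the affine linearity of geometric maps on simplices. The identity case is immediate: each vertex $x_i$ of $X$ has global barycentric coordinates equal to the standard basis vector $e_i$, since $x_i$ is represented by the trivial convex combination $1 \cdot x_i$. Hence the $i$th column of $M(\mathrm{id}_X)$ is $e_i$, so $M(\mathrm{id}_X) = I_m$.

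For composition, I would fix a vertex $x_j$ of $X$ and chase its image through the definitions. By Definition \ref{def:mfunc}, the $j$th column $\vec{a}^{(j)}$ of $M(f)$ consists of the global barycentric coordinates of $f(x_j)$, so $f(x_j) = \sum_k a_k^{(j)} y_k$. The crucial observation is that because $f$ is a geometric map, the support of $\vec{a}^{(j)}$ lies in the vertices of a single simplex $\sigma$ of $Y$ --- namely the minimal simplex containing $f(x_j)$. Since $g$ is affine when restricted to $\sigma$, one may push $g$ through this convex combination to obtain $g(f(x_j)) = \sum_k a_k^{(j)} g(y_k)$. Substituting the global barycentric expansions $g(y_k) = \sum_l b_l^{(k)} z_l$ (the $k$th column of $M(g)$) and exchanging the order of summation gives $g(f(x_j)) = \sum_l \bigl( \sum_k b_l^{(k)} a_k^{(j)} \bigr) z_l$, whose coefficients are exactly the $j$th column of $M(g) \cdot M(f)$.

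The step requiring the most care is verifying that the resulting column is a bona fide global barycentric coordinate of $g(f(x_j))$ in $Z$, not merely some linear combination summing to the right point. Nonnegativity and the sum-to-one property are purely formal: products of column-stochastic nonnegative matrices are again column-stochastic. The subtler condition is that the column's support lies within a single simplex of $Z$. This is where the geometric map axiom is invoked at two levels: the image $f(\sigma)$ lies inside some simplex of $Y$, determining the support of $\vec{a}^{(j)}$; and then, because $g$ is also a geometric map, the image $g(\sigma)$ lies inside a single simplex $\tau$ of $Z$, so each $g(y_k)$ for $y_k$ a vertex of $\sigma$ has global barycentric support in the vertices of $\tau$. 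The matrix-product column is therefore supported on vertices of $\tau$, confirming validity. This same observation incidentally re-derives that $g \circ f$ satisfies the geometric-map axiom --- the fact that justifies calling $\mathsf{GM}$ a category in the first place.
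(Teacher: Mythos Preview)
Your proof is correct and follows essentially the same approach as the paper: fix a vertex of $X$, expand $f(x_j)$ and each $g(y_k)$ in global barycentric coordinates, and use affineness of $g$ on the simplex containing $f(x_j)$ to push $g$ through the convex combination. Your treatment is in fact more thorough than the paper's, which does not explicitly verify that the resulting coefficients form a valid global barycentric coordinate vector (nonnegativity, sum-to-one, support in a single simplex of $Z$); your argument for that point via the geometric-map axiom applied to $g$ is a welcome addition.
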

\begin{proof}
    Consider a vertex $x\in X_0.$ Denote $f(x)=a^jy_j$ using global barycentric coordinates in 
    $y,$ and denote $g(y_j)=b^k_jz_k$ in the same manner. Then since $g$ is affine on 
    simplices of $Y$ and $f(x)$ lies in some such simplex, we have 
    $(g\circ f)(x)=a^jb^k_jz_k.$ This shows that $x$'s column in $M(g\circ f)$ corresponds 
    to its column in $M(g)\cdot M(f),$ which proves the statement on composition. The 
    identity geometric map sends each vertex $x\in X_0$ to itself with barycentric coordinate $(1,)$.
    The corresponding matrix representation is $I_n$.
\end{proof}

This result lets us compute with geometric maps, albeit a class that models reasonably 
general continuous maps of simplicial complexes, taking direct advantage of fast
linear algebraic algorithms, and independently of the embeddings.
When stored in a compressed sparse column (CSC) matrix, the global barycentric components matrix is capturing the data of the geometric map directly. Each column of this matrix stores the image of a vertex in the domain. It will have at most $k$ nonzeros for a $k$ dimensional domain. The nonzeros in that column will occur in rows $i_1,\dots, i_k$ which are the integers stored in the CSC representation.

The functoriality of $M$ means that any diagram of spaces (Figure \ref{fig:spatial-diagram}) can be converted into a rigorous diagram in the category of vector spaces and linear maps. These diagrams are usually used informally to describe multigrid methods based on their shapes, for example V-Cycles, W-Cycles, etc. By taking diagrams of geometric maps and functorially constructing the corresponding diagram of linear maps, we can use these descriptions as formal specifications of complex cycling schemes. For example what might be verbally described as ``a 2 level V-cycle with binary subdivision followed by a 2 level W-cycle with ternary subdivision'' is formally specified in Figure \ref{fig:spatial-diagram}.

\begin{figure}[htbp]
    \centering
\begin{tikzcd}[cramped,column sep=small]
	{X_1} &&&& {X_1} &&&&&& {X_1} \\
	& {X_2} && {X_2} && {Y_2} && {Y_2} && {Y_2} \\
	&& {X_3} &&&& {Y_3} && {Y_3}
	\arrow["{f_1}"', from=1-1, to=2-2]
	\arrow["{f_1}", from=1-5, to=2-4]
	\arrow["{g_1}", from=1-5, to=2-6]
	\arrow["{g_1}"', from=1-11, to=2-10]
	\arrow["{f_2}"', from=2-2, to=3-3]
	\arrow["{f_2}", from=2-4, to=3-3]
	\arrow["{g_2}", from=2-6, to=3-7]
	\arrow["{g_2}"', from=2-8, to=3-7]
	\arrow["{g_2}", from=2-8, to=3-9]
	\arrow["{g_2}"', from=2-10, to=3-9]
\end{tikzcd}
    \caption{A diagram of spaces and geometric maps encoding a complex multigrid scheme. This is a 2 step V-Cycle followed by a 2 step W-Cycle. Such diagrams are informally used in the multigrid literature. Our approach uses them as formal diagrams in the category $\mathsf{GM}$. The functor $M$ (\cref{def:mfunc}) translates such diagrams into geometric multigrid schemes.}
    \label{fig:spatial-diagram}
\end{figure}
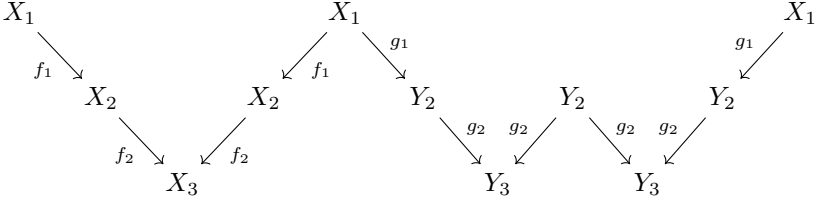

\begin{figure}[h!]
    \centering
    \includegraphics{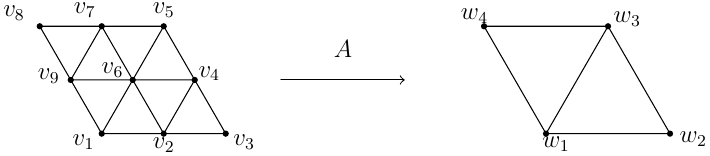}
    \[
    A=\begin{bmatrix} 
    1&1/2&0&0&0&1/2&0&0&1/2\\ 
    0&1/2&1&1/2&0&0&1/2&0&0\\
    0&0&0&1/2&1&1/2&1/2&0&0\\
    0&0&0&0&0&1/2&1/2&1&1/2
    \end{bmatrix}\]
    \[=\begin{bmatrix}1&.5&.5&.5&.5&1&.5&.5&.5&1&.5&.5&.5&.5&1&.5\\
    1&1&1&1&2&2&2&2&3&3&3&3&4&4&4&4\\
    0&1&3&4&6&7&10&13&14&16\end{bmatrix}
    \]
    \caption{A two-simplex subdivision, with its 
    matrix represented both densely and in the CSC 
    sparse representation we primarily utilize.}
    \label{fig:barycentric-csc}
\end{figure}

\subsubsection{Transferring discrete differential forms along geometric maps}

The implementation of geometric maps discussed above lends itself to a clean methodology 
for transferring vector fields along such maps. As a note, we discuss vector fields in this section; 
techniques for converting between vector fields and discrete differential forms on a fixed 
mesh are discussed in any reference on the DEC such as Section 5 of Hirani~\cite{hirani_discrete_2003}.

Transferring vector fields along geometric maps is crucial for multigrid methods, which 
involve interpolating such a quantity from a grid to a subdivided grid, and then restricting
it back to the coarse grid. We shall see that the former can be given precisely just by a multiplication 
by the matrix of the geometric map as defined above; the latter requires a bit more thought.

We exemplify the situation using the geometric map $f:V\to W$ in Figure \ref{fig:ex-subdiv}, which is an instance of one of the
built-in subdivisions used in our software.
A vector on $W$ is given by simply attaching a scalar to each vertex $w_i$. All possible such assignments form a vector field on $W$. As a special case,
a scalar function on $W$ of Figure \ref{fig:ex-subdiv} can be described by a length-3 vector $\begin{bmatrix} a&b&c\end{bmatrix}$.
We send such a field over to $V$ by interpolating the values at the adjacent vertices to
get values at $v_2,v_4,v_6$; that is, the interpolation mapping is 
\[
\begin{bmatrix} a&b&c\end{bmatrix} \mapsto \begin{bmatrix} a&\frac{a+b}{2}&b&\frac{b+c}2&c&\frac{c+a}2\end{bmatrix}
\]
which is precisely right multiplication by $f.$ Because the discrete scalar function are also stored in arrays using the same vertex ordering that is used in the global barycentric coordinates, sparse matrix vector products perform the correct interpolation operation.

Observe that we did not have to do any special calculations about interpolation to produce 
this mapping; it is applicable to the entire class of geometric maps and generalizes 
far beyond subdivision mappings. This observation for the example map $f$ 
generalizes to any map and to functions of any dimension. That is:
\begin{definition}
Given a geometric map $f:X\to Y$ and a vector field $V$ on $Y,$ the \emph{interpolation} 
of $V$ to $X$ is given as follows:
\begin{enumerate}
    \item Express $V$ as a row vector of column vectors, indexed by vertices of $Y.$ (Thus 
    if $Y$ has $n$ vertices and $V$ is a $k$-dimensional vector field, we are expressing $V$ 
    as an element of $(\mathbb{R}^k)^n.$)
    \item Multiply this row vector on the right by (the matrix representing) $f.$
\end{enumerate}
\end{definition}

Turning to restriction, we are essentially trying to push a vector field \emph{forward} along
a geometric map, which is not quite an operation that's even possible in full mathematical generality.
In multigrid proper, there are two usual approaches: 
\begin{itemize}
\item The \emph{injection} operator, 
which simply maps a value from a vertex of the fine mesh $V$ to the same value at the same vertex
of the coarse mesh $W$, forgetting all the information at vertices appearing only in $V$;
\item The \emph{full-weighting} operator, which calculates the value of the restricted 
field at some $w$ by an appropriate weighted average over values at $w$'s neighbors. 
\end{itemize}

The injection operator fits uncomfortably in our setting, as it requires that every 
vertex of $W$ be hit by exactly one vertex of $V,$  which is quite a strong constraint on 
a general geometric map, as opposed to a grid refinement in particular. We can, however, give an
operation associated to an arbitrary geometric map that specializes to the full-weighting
operator in the case of multigrid. The main observation is that, given any
$f:V\to W,$ we can take the "neighbors" of some $w\in W$ to be the vertices $v\in V$ 
such that $v$ is mapped into a simplex with $w$ as one of its vertices, 
thus such that the $(v,w)$th entry in the matrix of $f$ is nonzero. Furthermore, 
the closer this matrix entry is to $1$, the closer $f(v)$ is to $w,$ thus the more we 
should want the value of a vector at $v$ to affect the value of its pushforward at 
$w.$ We can thus construct the pushforward a field on $V$ using the average of all its values
at vertices mapping to neighbors of $w$, weighted precisely by these coefficients. 
This amounts to the following definition: 

\begin{definition}
Given a geometric map $f:X\to Y,$ the \emph{restriction operator} of $f$ is 
the operator whose matrix is given by the transpose of the row-normalization of $f.$ 
The restriction operator is applied to a vector field on $X$ just as $f$ itself
is applied to a vector field on $Y.$
\end{definition}

Thus in the case of our example $f:V\to W$ above, the restriction operator is given by 
\[\begin{bmatrix}1/2&0&0\\1/4&1/4&0\\0&1/2&0\\0&1/4&1/4\\0&0&1/2\\1/4&0&1/4\end{bmatrix}.\]
The resulting effect of taking a vector field on $V$ to the one whose value at $w$ 
is half the value at $w$ itself plus a quarter the value at each neighbor of $w$ 
is familiar from the standard full weighting operator on a 1-D binary multigrid setup.
As is standard, we row-normalize these restriction maps such that each row sums to one. This enforces the weightings to preserve constant scalar functions (up to floating point error).

Note that the subdivision method described does not implement limit surfaces such as in the subdivision exterior calculus~\cite{de_goes_subdivision_2016}. The repeated subdivision of these linear meshes do not converge to smooth surfaces in the limit in general.\footnote{For flat surfaces, linear subdivision does indeed capture the desired domain, of course.} We produce finer and finer grids that discretize the original geometry. Future work could extend this approach to implement limit surfaces, by accounting for how the changes to the geometry due to refinement affect the discretization.

Note that our approach to geometric maps allows for the arbitrary subdivision of the 2-simplices of a mesh. We have implemented binary subdivision and cubic subdivision of triangles; these split each each triangle into 4 component triangles or 9 component triangles respectively. Indeed, we will perform numerical validation using maps from both binary and cubic subdivision in Section \ref{sec:poisson-ncr}.

\section{Validation on simulation problems}
\label{sec:results}
\subsection{Poisson problem}
\label{sec:poisson-problem}

The Poisson problem serves as a classic benchmark problem, and appears as the backbone of the streamfunction-vorticity formulation of the Navier-Stokes equations and in pressure projection methods. The Poisson problem appears as a component of the porous convection problem which is to be examined in Section \ref{sec:porous-conv}. Here, we investigate the Poisson problem in its own right to demonstrate the DEC GMG\footnote{When the context is clear, this may also be abbreviated as simply GMG.} solver.

\subsubsection{Encoding the Poisson problem in the DEC}
We solve the Poisson problem, formulated in the usual vector calculus notation:
\[\Delta u = f,\]
where $u$ is an unknown scalar function and $f$ is a known scalar function. Firstly, we identify these scalar functions as differential 0-forms from the exterior calculus. That is, they are directionless quantities which can be evaluated at any point on a manifold. In the DEC, we can choose to value such 0-forms on a primal mesh or dual mesh, which are primal 0-forms and dual 0-forms respectively. Here, we will take them to be primal 0-forms.

We recover the Laplacian operator via the so-called Hodge Laplacian. In the DEC, the Hodge Laplacian operator can be constructed from a composition of the differential and codifferential operators, using function composition notation like so:
\[\Delta = \extcodif \extd + \extd \extcodif.\]
Explicitly showing the application of these operators, we can equivalently write:
\[\Delta(x) = \extcodif(\extd(x)) + \extd(\extcodif(x)),\]
where $x$ is a differential form of arbitrary degree.
The exterior derivative operator, $\extd$, generalizes the gradient operation, $\nabla$, while the codifferential operator, $\extcodif$, generalizes the divergence operation, $\nabla \cdot$. These operators have the effect of incrementing or decrementing the degree of their argument, respectively. For example, the exterior derivative of a 0-form is a 1-form (a vector-like quantity), and the codifferential of a 1-form is a 0-form. Since the codifferential of a 0-form is not defined, (there are no differential forms of negative degree), the 0-Hodge Laplacian is written without the second term:
\[\Delta = \extcodif \extd.\]
Observe that this indeed recovers the usual scalar Laplacian, (the divergence of the gradient).

The codifferential operator itself is defined in terms of the exterior derivative operator, and the Hodge star, $\star$, like so:
\[\extcodif = \sds.\]
The Hodge star operator has the effect of inverting the degree of its argument, encoding a notion of duality. For example, on a 2-manifold, the Hodge star of a 0-form is a 2-form. This captures the notion of converting a density-like quantity defined at points to a mass-like quantity defined over areas. The Hodge star of a 1-form on a 2-manifold is another 1-form. This has the effect of rotating the flow in an orthogonal direction. The inverse Hodge star operator, $\star^{-1}$, reverses this effect, but may introduce a sign change as appropriate.
Fully expanded then, we write the 0-Hodge Laplacian operator in the DEC as:
\[\Delta = \sds \extd.\]
Annotating these operators with the degree of their argument, and labeling the inverse Hodge star, we write:
\[\Delta_0 = \star_0^{-1} \tilde{\extd_1} \star_1 \extd_0.\]
In the DEC, the result of a Hodge star operation is stored on the dual mesh. We can denote this with a tilde over the exterior derivative which takes a dual form as its argument, and call it the dual exterior derivative:
\begin{equation}
\label{eq:hodge0-laplacian}
\Delta_0 = \star_0^{-1} \tilde{\extd}_1 \star_1 \extd_0.
\end{equation}

Note that in the DEC, the exterior derivative and Hodge star can be computed as matrix-vector multiplications. So, we discretize the Laplacian operator itself as the product of the matrices which encode these primitive operations. In particular, $\tilde{\extd}_1$ and $\extd_0$ can be defined as the boundary matrix of the edges of the mesh and its negated transpose, respectively. $\star_1$ can be defined as a diagonal matrix with elements the ratio of the length of each primal edge to the dual edge which crosses it in the dual mesh. $\star_0^{-1}$ can be defined as a diagonal matrix with elements the inverse of the area of each dual cell associated to a primal vertex. Figure \ref{fig:pd-scs} illustrates the cells of a primal simplicial complex and its dual. We recall that the graph Laplacian can be defined as the product of the boundary (vertex-edge incidence) matrix with its transpose, and we see this sparsity pattern occurs in the Laplacian equation. The inclusion of the Hodge star operators can thus be interpreted as accounting for metric information coming from the $n$-volumes of the $n$-simplices of the simplicial complex. In this manner, the appropriate discrete Laplacian operator is generated for any manifold-like simplicial complex.

\begin{figure}[htbp]
    \centering
    \begin{subfigure}[t]{0.45\textwidth}
        \centering
        \includegraphics[width=\textwidth]{img/primal_simplicial_complex.pdf}
        \caption{A primal simplicial complex, with primal vertices in purple, primal edges in orange, and primal triangles in blue.}
        \label{fig:p-sc}
    \end{subfigure}
    \hfill
    \begin{subfigure}[t]{0.45\textwidth}
        \centering
        \includegraphics[width=\textwidth]{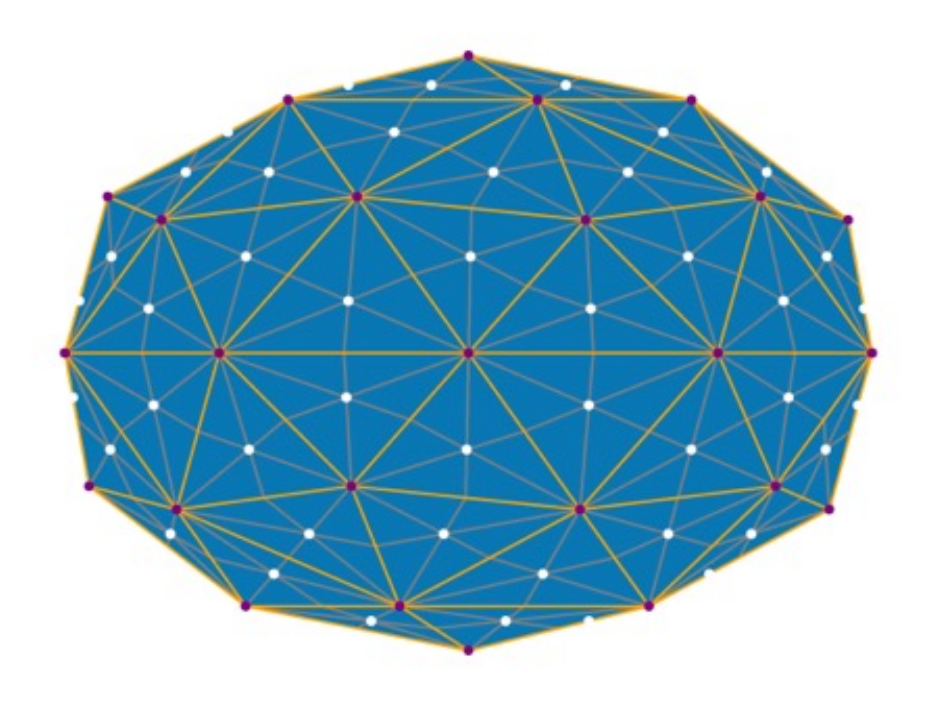}
        \caption{A dual simplicial complex, with dual vertices in white, dual edges in gray, and dual triangles the 10 or 12 component triangles surrounding a primal vertex.}
        \label{fig:d-sc}
    \end{subfigure}
    \caption{The primal simplicial complex of Figure \ref{fig:methodology-illustration-mesh} discretizing the unit sphere is shown alongside its dual mesh.}
    \label{fig:pd-scs}
\end{figure}

Solving the Poisson problem thus becomes a linear solve of the discrete Laplacian matrix. Indeed, at each level of subdivision of the mesh, we can define the discrete Laplacian matrix to represent the Laplacian operation at distinct resolutions. Note that we differ here from de Goes et al. by choosing to discretize each primitive component (both exterior derivatives and Hodge stars) of the Laplacian at each level, whereas their method involves only using discretizations of the Hodge star at each level~\cite{de_goes_subdivision_2016}. By layering these solves between our interpolation and restriction operators, we are able to perform V-cycles or W-cycles as required.

\subsubsection{Numerical convergence results}
\label{sec:poisson-ncr}

To validate the DEC GMG solver, we first perform internal validations to confirm the scaling of runtime and error is as expected. We then perform cross-comparisons of the DEC GMG solver versus other means of solving the discrete Poisson problem.

We first solve the discrete Poisson problem on (subdivisions of) the simplicial complex shown in Figure \ref{fig:poisson-mesh}. Meshes which consist of equilateral triangles are widely known to produce superior results in DEC simulations~\cite{mullen_hot_2011}. We subdivide this mesh repeatedly using 1 to 4 iterations of binary or cubic subdivision. Figure \ref{fig:poisson-mesh} also illustrates this mesh at a single level of binary subdivision.

\begin{figure}[htbp]
    \centering
    \includegraphics[width=\textwidth]{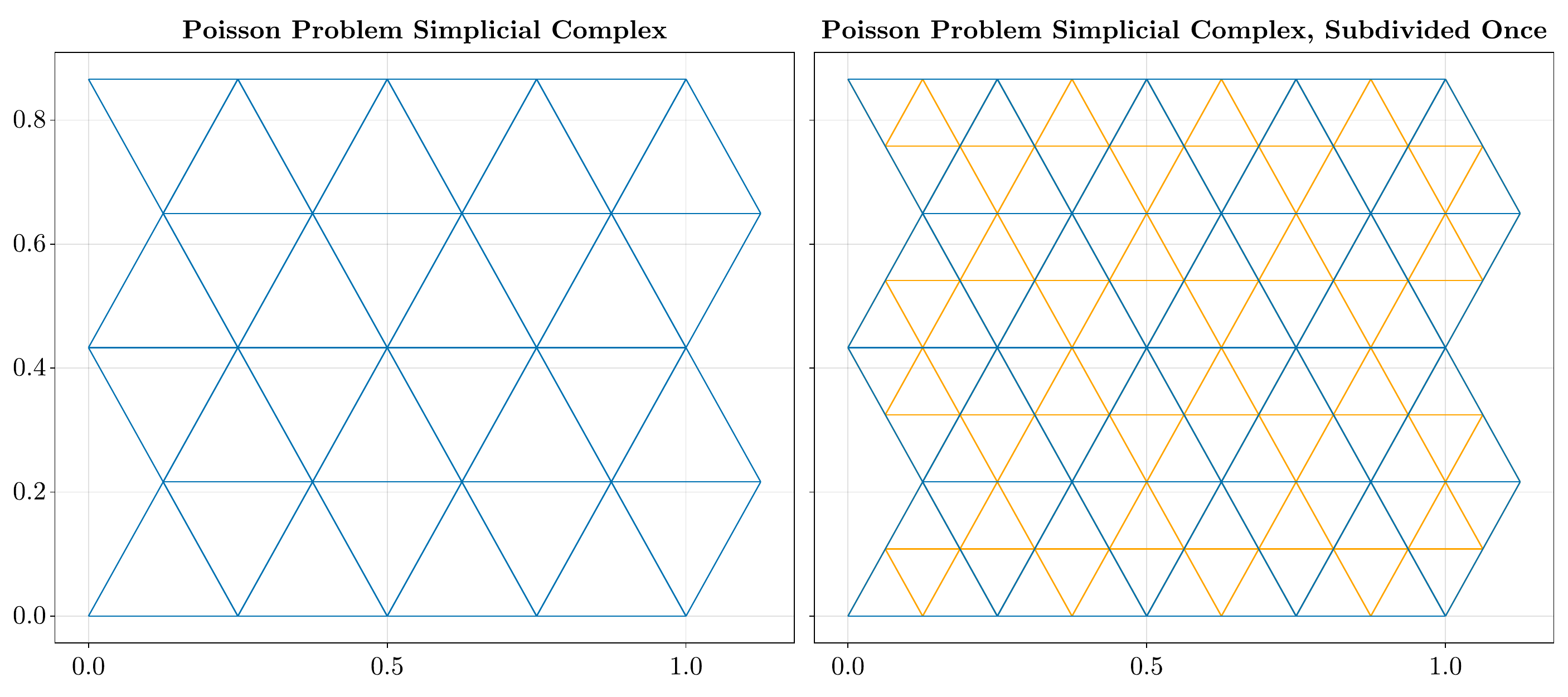}
    \caption{Left: A primal mesh consisting of 32 equilateral triangles is used as the domain for the internal validation Poisson problem test case. Right: The result of a single binary subdivision, resulting in 128 triangles.}
    \label{fig:poisson-mesh}
\end{figure}

Benchmark results of solving this Poisson problem for varying numbers of V-cycles are shown in Figure \ref{fig:vcycle-benchmarks}. The relative residual is computed via $\frac{||L u - b||_2}{||b||_2}$, where $L$ is the matrix form of the Laplacian operator, and $b$ is the known vector (on the finest mesh). 
Note that the relative residual computation should of course be expected to be more accurate on finer meshes. To ensure that this system is solvable, $b$ is initialized to $b = L r$, where $r$ is a vector of uniformly random data on $[0,1)$. We observe the runtimes of Figure \ref{fig:vcycle-benchmarks} are linear in the number of cycles, as expected, with the longest runtime belonging to the method using 4 iterations of cubic subdivision. Note that 4 iterations of cubic subdivision on the mesh of Figure \ref{fig:poisson-mesh} produces a mesh of 105,625 vertices, 25x the number of vertices in a mesh produced via 4 iterations of binary subdivision. The relative residuals of Figure \ref{fig:vcycle-benchmarks} decrease exponentially with the number of cycles, up to a point at which they level off. We observe that the rate at which the residual decreases is roughly equivalent for each scenario, but those with higher resolutions continue to find lower residual solutions as more cycles are performed. In addition to the total runtimes of Figure $\ref{fig:vcycle-benchmarks}$, we also compute the time to compute one additional V-cycle for both binary and cubic cases, arranged by the number of vertices in the finest mesh produced. These times are found to be linear in number of vertices and the results are presented in Figure \ref{fig:time_per1vcycle}.

\begin{figure}[htbp]
    \centering
    \includegraphics[width=0.8192\linewidth]{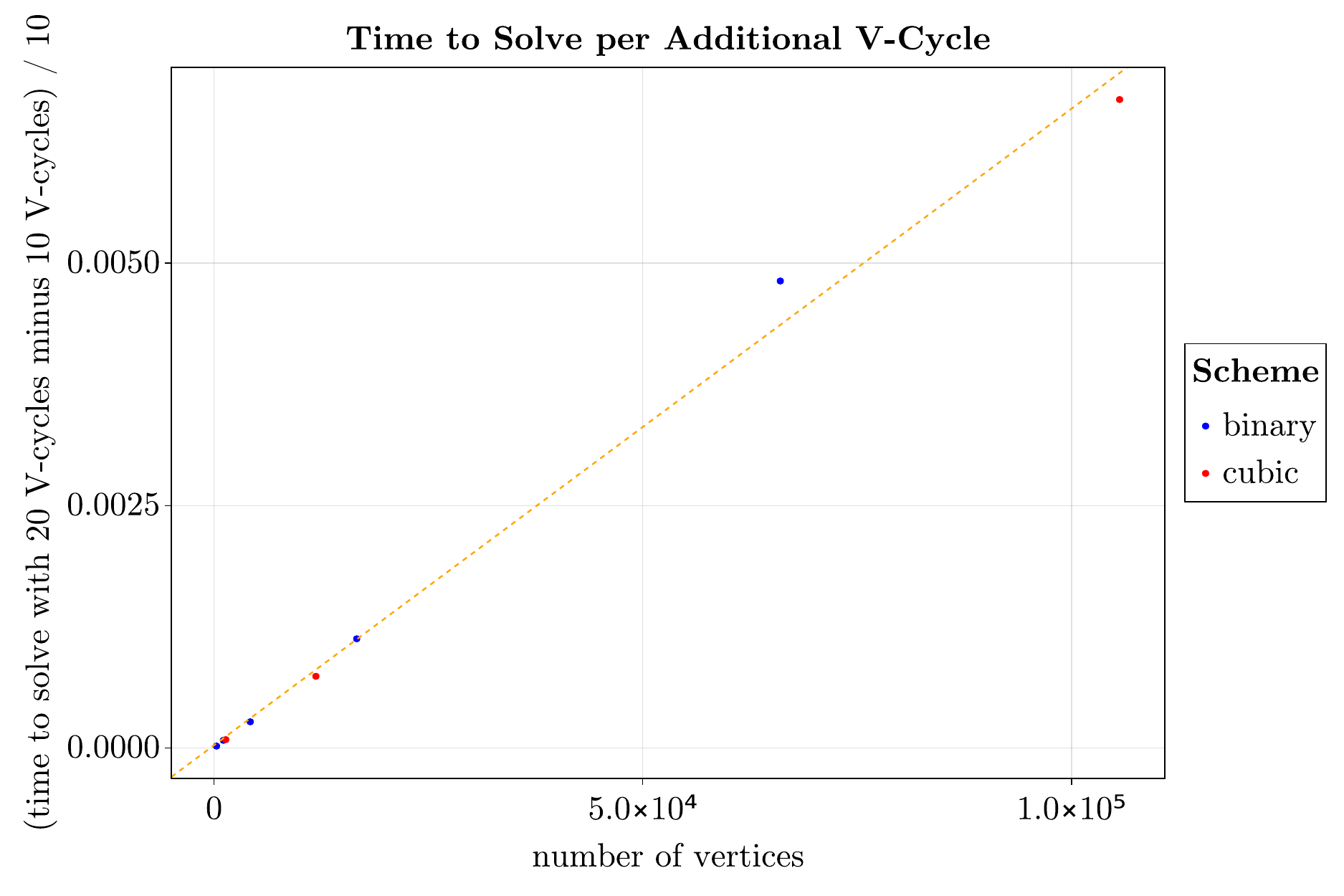}
    \caption{Shown here is the time in seconds to compute 20 V-cycles minus time to compute 10 V-cycles divided by 10, with 3 iterations of CG smoothing on each mesh. The timings are arranged by number of vertices in the finest mesh, with times associated with binary solves in blue, and cubic in red. The line of best fit to all datapoints considered together is $y = 6.57\text{e-}8x + 2.80\text{e-}5$, which corresponds to $6.57\text{e-}8$ seconds per additional vertex, with an overhead of $2.80\text{e-}5$ seconds.}
    \label{fig:time_per1vcycle}
\end{figure}

\begin{figure}[htbp]
    \centering
    \includegraphics[width=.95\textwidth]{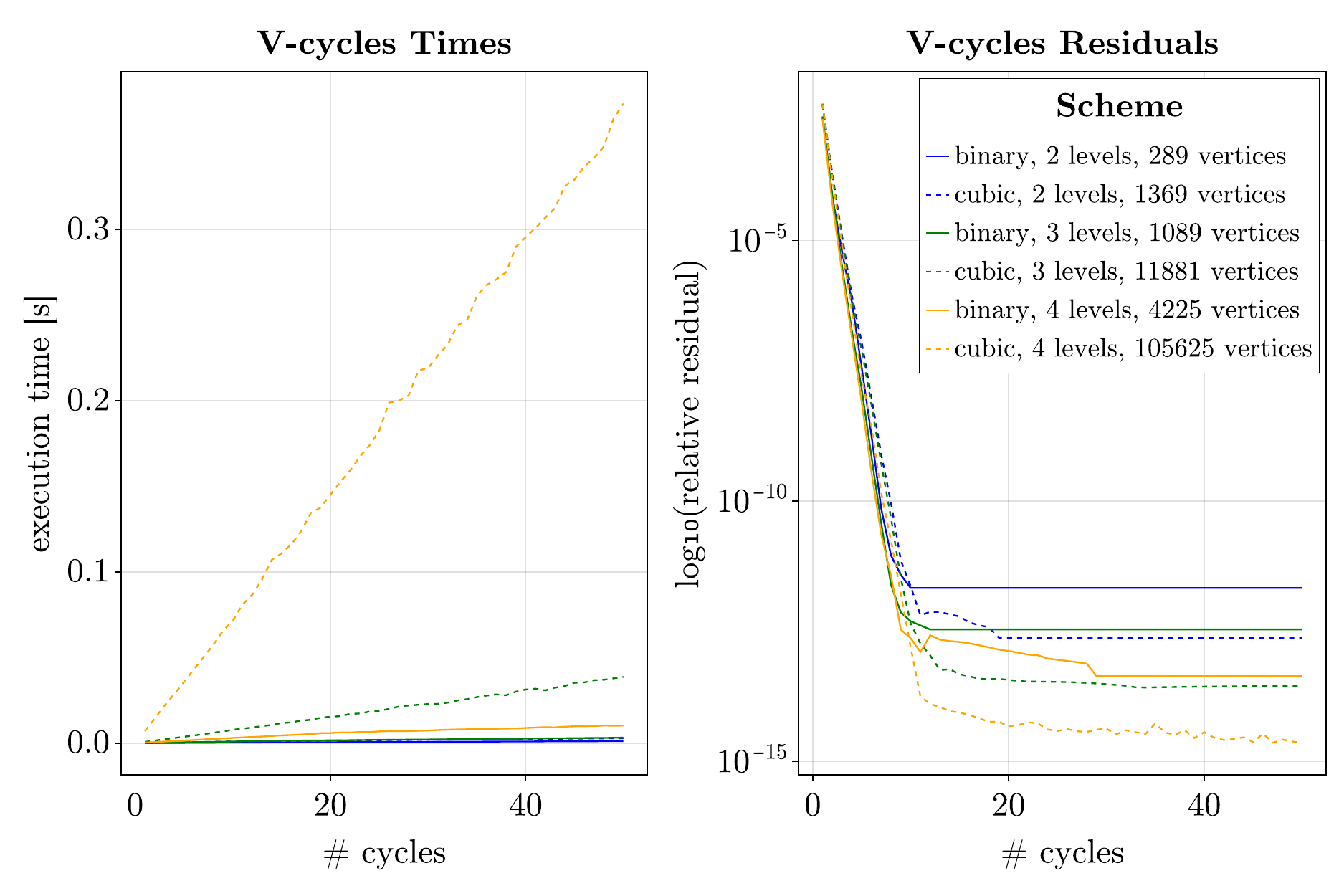}
    \caption{1 through 50 V-cycles were executed on varying subdivisions of the same domain. Their execution times and relative residuals are shown.}
    \label{fig:vcycle-benchmarks}
\end{figure}

Besides these internal validations, we also perform cross-comparison of the DEC GMG solver against other methods. These include simply directly factorizing $L$ and solving, the conjugate gradient (CG) method with ILU0 as a preconditioner, and using the DEC GMG solver as a preconditioner to CG. Note that we do not compare against an algebraic multigrid (AMG) method, due to the discrete Laplacian, in this case, being asymmetric as a result of employing reflective boundary conditions. Visualizations of the reference direct LU solution, as well as differences between the results of these solutions are shown in Figure \ref{fig:poisson_solution_comparisons}. This analysis is performed on subdivisions of the mesh in Figure \ref{fig:cross-comparison-mesh}. The discrete Laplacian $L$ considered by all methods is that defined on the finest mesh considered i.e., Figure \ref{fig:cross-comparison-mesh} subdivided 6 times by binary subdivision. In this setup, the DEC GMG methods all share the same finest mesh, and 5 w-cycles are performed, smoothing with Gauss-Seidel at each level. When the DEC GMG solver is used as a preconditioner, 2 w-cycles are performed, again with Gauss-Seidel smoothing at each level. For the finest discrete Laplacian $L$, the LU factorization took 15.4 seconds to compute, and the ILU0 factorization took 30.6 seconds. The direct factorization method of solving is used as a baseline against which iterative methods are to be compared, and we do not expect these methods to achieve this low relative residual of 1.83e-14. For the input data for these cross-comparisons, we employ data that is more representative of a typical solve than random data. Specifically, we use the quantity $\delta(\rho)$ as computed by the first step (initial conditions) of the porous convection problem considered in Section $\ref{sec:porous-conv}$. The formula to compute this quantity in the DEC is $\delta(\mathbf{g^\flat} \wedge (\alpha \rho_0 T))$, where the values for these parameters are the same as those detailed in $\ref{sec:pc-convergence}$. The physical interpretation of the results of this solve is thus pressure $P$.

\begin{figure}[htbp]
    \centering
    \includegraphics[width=0.7\textwidth]{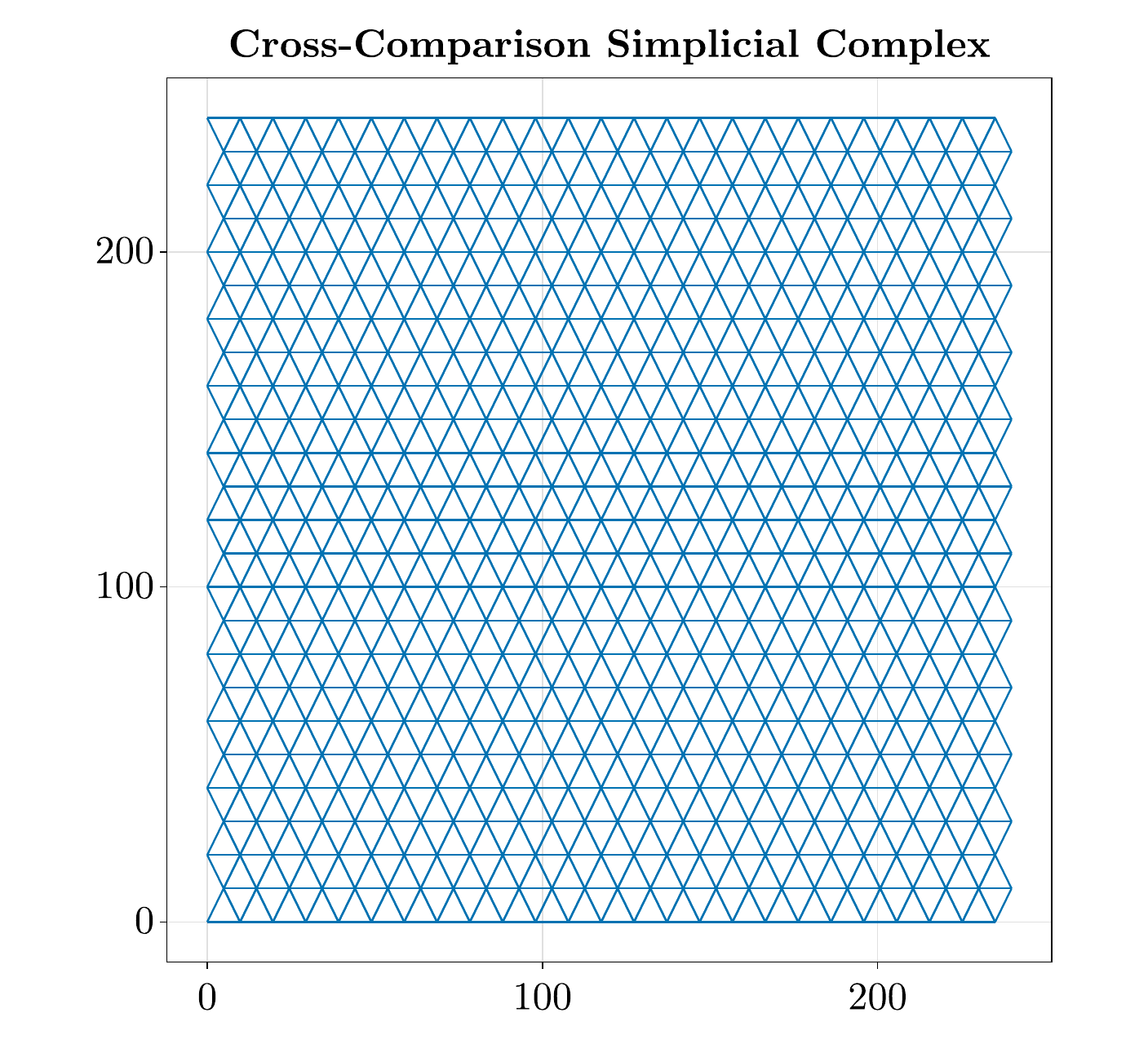}
    \caption{A primal mesh consisting of equilateral triangles is used as the domain for the cross-comparison Poisson problem test case. This coarsest mesh consists of 625 vertices, 1,776 edges, and 1,152 triangles. After 6 binary subdivisions, it consists of 2,362,369 vertices, 7,080,960 edges, and 4,718,592 triangles.}
    \label{fig:cross-comparison-mesh}
\end{figure}

\begin{figure}[htbp]
\centering
\includegraphics[width=0.99\linewidth]{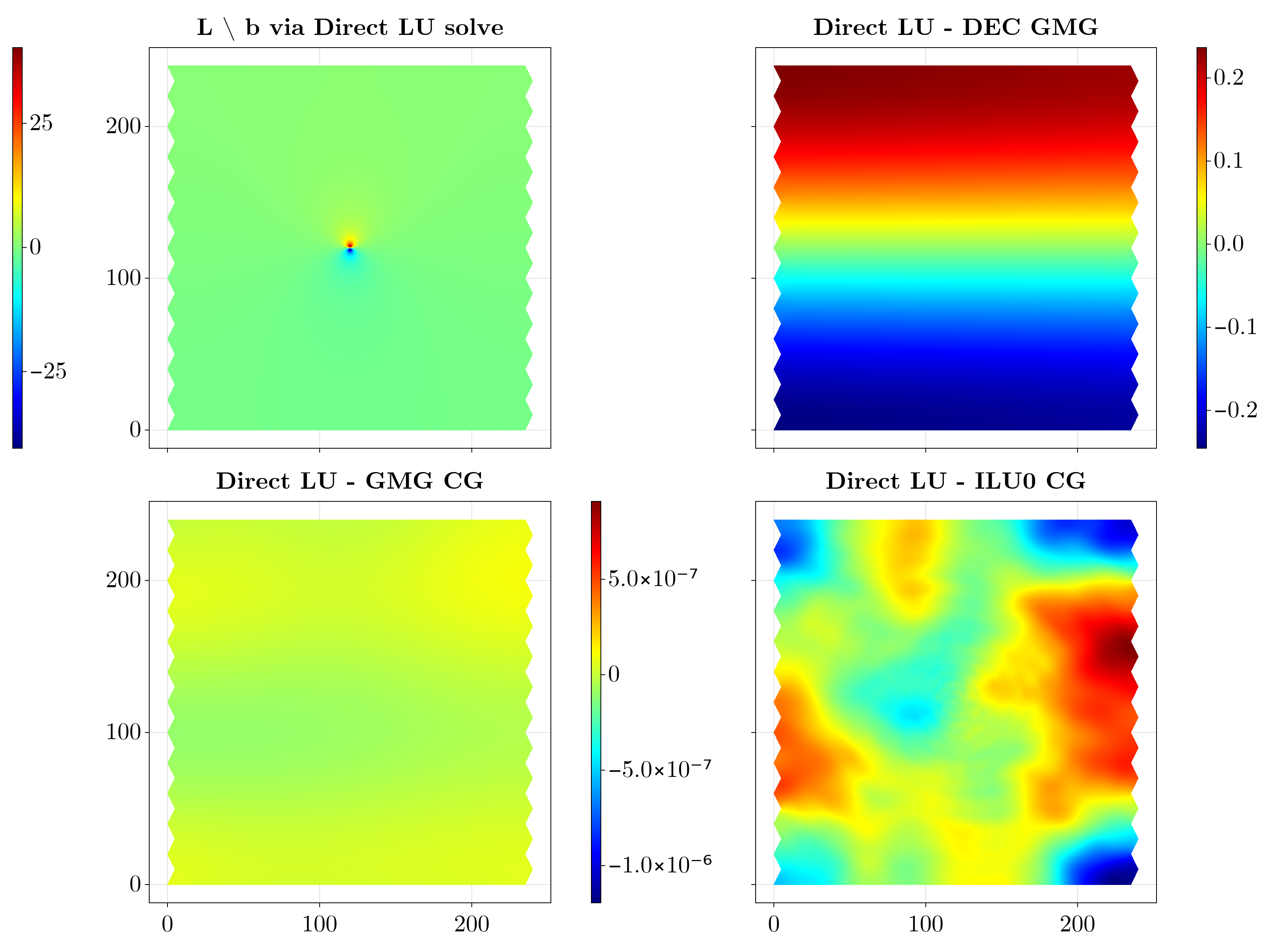}
\caption{The output of the Poisson problem direct LU solve is compared against the other three configurations. All subtractions are performed after centering all four solutions to have mean 0. Top-left: The output of the direct LU solution. In clockwise order: Top-right: The difference between the reference LU solution and the DEC GMG solution at each point. Bottom-right: The difference between the reference LU solution and the ILU0--CG solution at each point. Bottom-left: The difference between the reference LU solution and the DEC GMG--CG solution at each point. The bottom two plots share the same colorbar.}
\label{fig:poisson_solution_comparisons}
\end{figure}

\begin{table}[h!]

\centering
\caption{The DEC GMG solver is compared against other methods of solving the discrete Poisson problem on typical physically-meaningful data. The pre-processing times necessary to factor the discrete Laplacian via LU and ILU0 are also included for convenience. These timings were obtained on an M2 Macbook Air 2022 with 24 GB of RAM.}
\begin{tabular}{crrrrrr}
Levels & LU (s) & ILU0 (s) & LU Solve (s) & CG--ILU0 (s)   & GMG (s) & CG--GMG (s) \\\midrule
7      & 15.4   & 30.6     & 0.298        & 13.9           & 1.52    & 5.98        \\
6      & 15.4   & 30.6     & 0.298        & 13.9           & 1.50    & 10.0        \\
5      & 15.4   & 30.6     & 0.298        & 13.9           & 1.49    & 18.0        \\
4      & 15.4   & 30.6     & 0.298        & 13.9           & 1.45    & 38.7        \\
\end{tabular}
\label{tab:poisson-comparisons-runtime-divrho}
\end{table}

\begin{table}[h!]
\centering
\caption{The DEC GMG solver relative residuals are compared against other methods of solving the discrete Poisson problem on typical physically-meaningful data. Note that the Direct Factorization and CG-ILU0 methods are performed on the finest mesh, that of 6 subdivisions, but of course do not make use of coarsenings at other levels. Times for these test cases are copied to all rows for convenience.}
\begin{tabular}{crrrr}
Levels & LU Solve & CG--ILU0   & GMG      & CG--GMG   \\\midrule
7      & 1.83e-14 & 2.59e-9    & 1.10e-5  & 5.55e-10  \\
6      & 1.83e-14 & 2.59e-9    & 4.05e-5  & 3.34e-10  \\
5      & 1.83e-14 & 2.59e-9    & 1.71e-4  & 1.01e-9   \\
4      & 1.83e-14 & 2.59e-9    & 8.78e-4  & 9.96e-10  \\
\end{tabular}
\label{tab:poisson-comparisons-error-divrho}
\end{table}

We observe that the relative residual of the DEC GMG solver tends to decrease as more coarsenings are considered. The residual achieved with this method decreases from 8.78e-4 to 1.10e-5 as we consider more layers. This method takes approximately 1.5 seconds to compute for all considered cases, which is favorable compared to the 13.9 seconds necessary for this configuration of CG with ILU0. However, the relative residuals of the DEC GMG solver alone are not favorable. We can improve the relative residual by instead using DEC GMG as a preconditioner for CG. Using the DEC GMG solver as a preconditioner for CG, we observe the relative residuals decrease from 9.96e-10 to 5.55e-10 as we consider more layers, superior to the 2.59e-9 obtained by CG with ILU0. Interestingly, we note that the total time to solve the system decreases, even as more levels are considered and the residual decreases, going from 38.7 to 5.98 seconds. This is due to the effectiveness of the DEC GMG solver in providing a better starting system for the CG iterative solver to consider. For completeness' sake, we also include the runtimes and relative residuals of the same experiment using random data, to be found in Table \ref{tab:poisson-comparisons-runtime-random} and Table \ref{tab:poisson-comparisons-error-random} respectively.

\begin{table}[h!]

\centering
\caption{The DEC GMG solver is compared against other methods of solving the discrete Poisson problem on random data. The pre-processing times necessary to factor the discrete Laplacian via LU and ILU0 are also included for convenience. These timings were obtained on an M2 Macbook Air 2022 with 24 GB of RAM.}
\begin{tabular}{crrrrrr}
Levels & LU (s) & ILU0 (s) & LU Solve (s) & CG--ILU0 (s)   & GMG (s) & CG--GMG (s)   \\\midrule
7      & 15.4   & 29.9     & 0.336        & 8.33           & 1.52    & 5.30          \\
6      & 15.4   & 29.9     & 0.336        & 8.33           & 1.51    & 9.19          \\
5      & 15.4   & 29.9     & 0.336        & 8.33           & 1.48    & 14.1          \\
4      & 15.4   & 29.9     & 0.336        & 8.33           & 1.44    & 26.1          \\
\end{tabular}
\label{tab:poisson-comparisons-runtime-random}
\end{table}

\begin{table}[h!]
\centering
\caption{The DEC GMG solver relative residuals are compared against other methods of solving the discrete Poisson problem on random data. Note that the Direct Factorization and CG-ILU0 methods are performed on the finest mesh, that of 6 subdivisions, but of course do not make use of coarsenings at other levels. Times for these test cases are copied to all rows for convenience.}
\begin{tabular}{crrrr}
Levels & LU Solve & CG--ILU0   & GMG      & CG--GMG   \\\midrule
7      & 5.40e-16 & 1.33e-9    & 4.32e-8  & 2.63e-11  \\
6      & 5.40e-16 & 1.33e-9    & 1.47e-7  & 9.73e-11  \\
5      & 5.40e-16 & 1.33e-9    & 2.51e-7  & 5.63e-10  \\
4      & 5.40e-16 & 1.33e-9    & 6.48e-7  & 5.96e-10  \\
\end{tabular}
\label{tab:poisson-comparisons-error-random}
\end{table}

\clearpage

\subsection{Porous convection}
\label{sec:porous-conv}

Following the numerical validation of the DEC GMG solver of the Poisson problem in Section \ref{sec:poisson-problem}, we now wish to study a more complex physics system using this technique. We will consider a model of porous convection which requires solving the Poisson problem to ensure a divergence-free Darcy flux. This requires a linear solve of the discrete Laplacian operator.

\subsubsection{Encoding porous convection in the DEC}

\begin{figure}[hbtp]
    \centering
    \scalebox{0.8}{\begin{tikzcd}
	{\alpha \rho_0} && \bullet &&& T \\
	&& \bullet && \bullet && \bullet \\
	{\mathbf{g^\flat}} & \bullet & \rho & \bullet & \bullet & {\frac{\lambda}{\rho_0 c_p}} & \bullet \\
	\bullet && \bullet & P &&& \bullet \\
	& \bullet & {q_D} & \bullet & \bullet & \bullet & \bullet && {\dot{T}} & {bounded \hspace{1mm} \dot{T}} \\
	{-\frac{k}{\eta}} && \bullet && {-\frac{1}{\phi}} & \bullet & \bullet
	\arrow["1"{description}, dashed, from=1-3, to=1-1]
	\arrow["2"{description}, dashed, from=1-3, to=1-6]
	\arrow["{\boldsymbol{\cdot}}"', from=1-3, to=2-3]
	\arrow["{d_0}", from=1-6, to=2-5]
	\arrow["{\Delta_0}"', from=1-6, to=2-7]
	\arrow["{\partial_t}"', curve={height=-30pt}, from=1-6, to=5-10]
	\arrow["{\star_1}"', from=2-5, to=3-5]
	\arrow["2"{description}, dashed, from=3-2, to=2-3]
	\arrow["1"{description}, dashed, from=3-2, to=3-1]
	\arrow["\wedge"', from=3-2, to=3-3]
	\arrow["\delta"', from=3-3, to=3-4]
	\arrow["{\Delta^{-1}}"', from=3-4, to=4-4]
	\arrow["1"{description}, dashed, from=3-7, to=2-7]
	\arrow["2"{description}, dashed, from=3-7, to=3-6]
	\arrow["{\boldsymbol{\cdot}}", from=3-7, to=4-7]
	\arrow["2"{description}, dashed, from=4-1, to=3-3]
	\arrow["1"{description}, dashed, from=4-1, to=4-3]
	\arrow["{-}"', from=4-1, to=5-2]
	\arrow["{d_0}"', from=4-4, to=4-3]
	\arrow["{\boldsymbol{\cdot}}", from=6-3, to=5-3]
	\arrow["1"{description}, curve={height=18pt}, dashed, from=5-4, to=3-5]
	\arrow["2"{description}, dashed, from=5-4, to=5-3]
	\arrow["\wedge"', from=5-4, to=5-5]
	\arrow["{\star_0^{-1}}"', from=5-5, to=5-6]
	\arrow["2"{description}, dashed, from=5-7, to=4-7]
	\arrow["{+}"', from=5-7, to=5-9]
	\arrow["1"{description}, dashed, from=5-7, to=6-7]
	\arrow["bc", from=5-9, to=5-10]
	\arrow["2"{description}, dashed, from=6-3, to=5-2]
	\arrow["1"{description}, dashed, from=6-3, to=6-1]
	\arrow["1"{description}, dashed, from=6-6, to=5-6]
	\arrow["2"{description}, dashed, from=6-6, to=6-5]
	\arrow["{\boldsymbol{\cdot}}", from=6-6, to=6-7]
\end{tikzcd}}
    \caption{The Porous Convection equations in the DEC are written as a diagrammatic equation, using the theory established by Patterson et al.~\cite{patterson_diagrammatic_2023} and developed in Morris et al.~\cite{morris_decapodes_2024}. This diagrammatic equation corresponds to the exterior calculus equations of Figure \ref{fig:pc_dec} with boundary conditions applied.}
    \label{fig:pc_decapode}
\end{figure}


\begin{figure}
    \centering
    \begin{subfigure}[b]{0.44\textwidth}
    \begin{align}
        q_D = -\frac{k}{\eta}(\nabla P - \rho_0 \hspace{0.5mm} \alpha \hspace{0.5mm} \mathbf{g} \hspace{0.5mm} T) \label{eq:vc_darcy} \\
        \nabla \cdot q_D = 0 \label{eq:vc_incompressible} \\
        \nabla \cdot(\rho_0 \hspace{0.5mm} \alpha \hspace{0.5mm} \mathbf{g} \hspace{0.5mm} T) = \Delta P \label{eq:vc_divfree} \\
        \frac{\partial T}{\partial t} + \frac{1}{\phi}q_D \cdot \nabla T -\frac{\lambda}{\rho_0 c_p} \Delta T = 0 \label{eq:vc_pconv_heat_up}
\end{align}
        \caption{Porous convection as expressed in vector calculus.}
        \label{fig:pc_vc}
    \end{subfigure}
    \hspace{0.05\textwidth}
    \begin{subfigure}[b]{0.44\textwidth}
        \begin{align}
        q_D = -\frac{k}{\eta}(\extd P - (\rho_0 \hspace{0.5mm} \alpha \hspace{0.5mm}  \mathbf{g^{\flat}}) \wedge T) \label{eq:dec_darcy} \\
        \extcodif q_D = 0 \label{eq:dec_incompressible} \\
        \extcodif ((\rho_0 \hspace{0.5mm} \alpha \hspace{0.5mm}  \mathbf{g^{\flat}}) \wedge T) = \Delta P \label{eq:dec_divfree} \\
        \frac{\partial T}{\partial t} + \frac{1}{\phi}\Lie_{q_D}T -\frac{\lambda}{\rho_0 c_p} \Delta T = 0 \label{eq:dec_pconv_heat_up}
\end{align}
        \caption{Porous convection as expressed in exterior calculus.}
        \label{fig:pc_dec}
    \end{subfigure}
    \caption{The porous convection equations are given beside their direct translations into the exterior calculus.}
    \label{fig:pc_eqs}
\end{figure}

\begin{subequations}
Here we consider a model of porous convection translated into the DEC. A vector calculus formulation of these porous convection equations can be found in online lecture notes~\cite{rass_solving_2024}. This solver is fundamentally based on solving a Poisson problem to compute the pressure, $P$. In the exterior calculus, we encode this Poisson problem as follows:

\begin{equation}
\label{eq:poisson}
\Delta P = \delta(\mathbf{g^\flat} \wedge (\alpha \rho_0 T))
\end{equation}

where $P$ is the unknown pressure and the component quantities are to be defined shortly.

Figure \ref{fig:pc_vc} presents the vector calculus formulation of the porous convection equations, and Figure \ref{fig:pc_dec} presents each equations translation into the exterior calculus.
In the vector calculus notation, we can define the Darcy flux as in equation \ref{eq:vc_darcy},
where $T$ is the temperature, $P$ is the pressure, $q_D$ is the Darcy flux, $\mathbf{g}$ is the acceleration due to gravity, and the rest $(k, \eta, \rho_0, \alpha)$ are various constants.
Since we enforce the Darcy flux to be divergence-free, encoded as equation \ref{eq:vc_incompressible},
we can substitute equation \ref{eq:vc_darcy} into this equation and rearrange to get, equation \ref{eq:vc_divfree}.

Notice that this is the desired form of the Poisson problem, equation \ref{eq:poisson}, to solve for $P$. With $P$ determined, $q_D$ can then be calculated and the update to the temperature is defined by equation  \ref{eq:vc_pconv_heat_up}, where $\frac{\lambda}{\rho_0 c_p}$ is a diffusion coefficient.
To represent these equations in the exterior calculus, we directly convert the vector calculus operators into their exterior calculus counterparts. These operator-level translations are shown in Table \ref{tab:vc_dec_operators}. Applied to equations \ref{fig:pc_vc}, we produce the exterior calculus equations \ref{fig:pc_dec}.

\begin{table}[htbp]
\centering
\caption{Vector calculus operators are shown beside the corresponding exterior calculus operators. These translations are suitable for simulations on 2D domains.}
\begin{tabular}{ccc}
Vector Calculus Operator &  & Exterior Calculus Operator \\\midrule
$\nabla$                 &  & $\extd$                    \\
$\hspace{1mm}\nabla \cdot$    &         & $\extcodif$                \\
$\hspace{-0.7mm}\cdot \nabla$   &          & $\Lie$                     \\
\end{tabular}
\label{tab:vc_dec_operators}
\end{table}

In this DEC formulation, we can write the equivalent to equation \ref{eq:vc_incompressible} as Figure \ref{eq:dec_incompressible}.
For a theoretical background on differential forms for use in fluid flow applications, refer to Marsden et al.~\cite{marsden_applications_2002}.
We use the $\wedge$ operator between $\mathbf{g^\flat}$ and $T$ in equation \ref{eq:dec_darcy} because $\mathbf{g}$ is a vector field. In the vector calculus definition, equation \ref{eq:vc_darcy}, writing $\mathbf{g}T$ denotes the point-wise scaling of the gravitational acceleration vector with the temperature at that point. To accomplish that same operation in the exterior calculus, we explicitly employ the $\wedge$ operator between a $1$-form, which is the lowered $\mathbf{g^{\flat}}$, and the $0$-form $T$. The $\flat$ operator is simply a way of converting a vector field into a covector field i.e., a differential form. Note that we have elided wedge product operators between $\rho_0$ and $\alpha$, and between $\alpha$ and $\mathbf{g^{\flat}}$. Note that in the former, this wedge product is between two 0-forms, and since 0-forms are evaluated at singular points, their wedge product is equivalent to (point-wise) multiplication across the domain.

$\Lie_{q_D} T$ denotes the Lie derivative. The Lie derivative is a generalization of the directional derivative which is valid on more general manifolds. Here, this represents the change in $T$ along $q_D$. Using the notation $\uflat(\textbf{v})$  to denote the inner product of the 1-form $\uflat$ with the vector $\textbf{v}$, the Lie derivative of a 1-form can be defined via the Cartan (magic) formula for 1-forms:
\[
\Lie_v \uflat = \mathbf{v^\flat}(\extd \uflat) + \extd (\textbf{v}(\uflat)).
\]

Figure \ref{fig:pc_decapode} represents this DEC system of equations as a diagrammatic equation~\cite{patterson_diagrammatic_2023}. This diagram can be read as a computation graph, in which the values of top-level nodes are provided as initial conditions. Dotted arrows denote the arguments to a binary function - $\cdot, /, \wedge$. The ordering of arguments to binary functions is denoted by labels of 1 and 2. Intermediate variables are either named or are presented by $\bullet$. When multiple paths point to the same variable -- in this case, the $bounded \hspace{1mm} \dot{T}$ variable is pointed to by the path on the left, and the $\partial_t$ operation from above -- this denotes equality. Thus, this diagrammatic equation can be read: ``The partial derivative of $T$ with respect to time $t$ is equal to this path of operations."

We can solve these equations in the DEC by using the method of lines, discretizing our differential operators of equations \ref{eq:dec_divfree}, \ref{eq:dec_darcy}, and \ref{eq:dec_pconv_heat_up} along the spatial coordinates, while leaving the derivative with respect to time continuous in $\frac{\partial T}{\partial t}$. In Figure \ref{fig:pc_decapode}, all operations are either arithmetic or come from the DEC, save for the $bc$ operation. This is a masking operation which applies 0 to the finite updates to $T$ at the top and bottom boundaries of the domain. The \texttt{Decapodes.jl} package provides an embedded domain-specific language (eDSL or DSL) for specifying diagrammatic equations such as this. Once the DEC equations are parsed into a diagrammatic equation, or ``Decapode", this data structure is automatically compiled down to Julia code which simulates the porous convection dynamics. The DEC operators are automatically determined based on a choice of mesh. In this case, we provide the mesh data structure (simplicial complex) shown in Figure \ref{fig:pc-meshes}.

Just as with the usual vector calculus formulation, we will perform a linear solve of the discrete Laplacian in equation \ref{eq:dec_divfree}. In the following section, using the same discrete sparse matrix version of this Laplacian, we will compare the effect of using the DEC GMG solver for this linear solve versus a direct LU solve acting as a reference solution, and a typical GMRES solver with an ILU0 preconditioner. These solvers are selected by choosing different bindings for the $\Delta^{-1}$ operation shown in Figure \ref{fig:pc_decapode}. The Decapodes software decouples the specification of the physics, via the eDSL or computation graph, from the specification of the geometry and the solver. The geometry can be specified via standard triangular meshing interchange formats and the solvers are imported from the SciML suite of solvers in \texttt{OrdinaryDiffEq.jl}.

\end{subequations}

\subsubsection{Numerical convergence results}
\label{sec:pc-convergence}

To investigate the performance of the DEC GMG solver on the porous convection problem, we require inputs for the domain (simplicial complex) on which to simulate, as well as initial conditions for the dynamic temperature field $T$ and the associated constants and parameters. For convenience, we choose a similar domain and similar initial conditions to a reference pedagogical  implementation~\cite{rass_solving_2024} in the traditional vector calculus.

We study these dynamics on the domain discretized in Figure \ref{fig:pc-meshes}. We note that this domain is topologically equivalent to the unit disk, and so does not require any special consideration of the harmonic components. We refer the reader to the work of Yin et al. which considers the adjustments which may be necessary for domains with more complicated topology~\cite{yin_fluid_2023}. The discrete Laplacians on the finest and second-finest domain are constructed according to equation \eqref{eq:hodge0-laplacian} and shown in Figure \ref{fig:hodge0-laplacians}.

\begin{figure}[htbp]
    \centering
    \includegraphics[width=\textwidth]{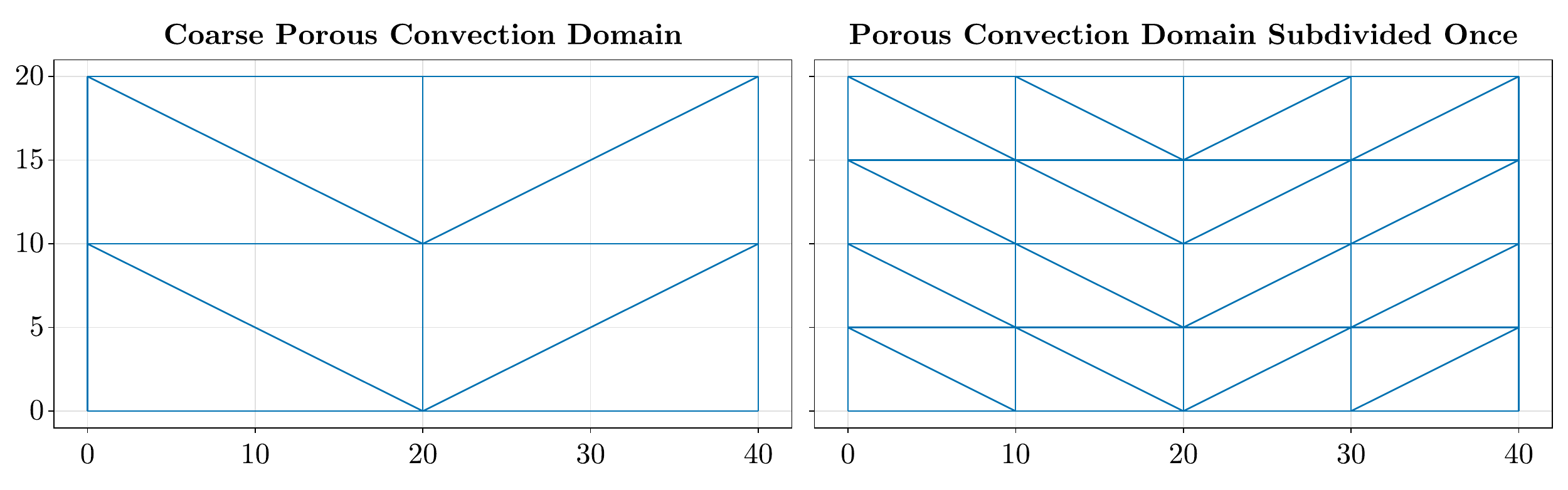}
    \caption{The Porous Convection problem is simulated on a rectangular domain. Shown here is the coarsest level used by the solver (left), and the second-coarsest level i.e., after a single binary subdivision (right). The finest domain consists of 16,641 vertices, 49,408 edges, and 32,768 triangles.}
    \label{fig:pc-meshes}
\end{figure}

\begin{figure}[htbp]
    \centering
    \includegraphics[width=.9\textwidth]{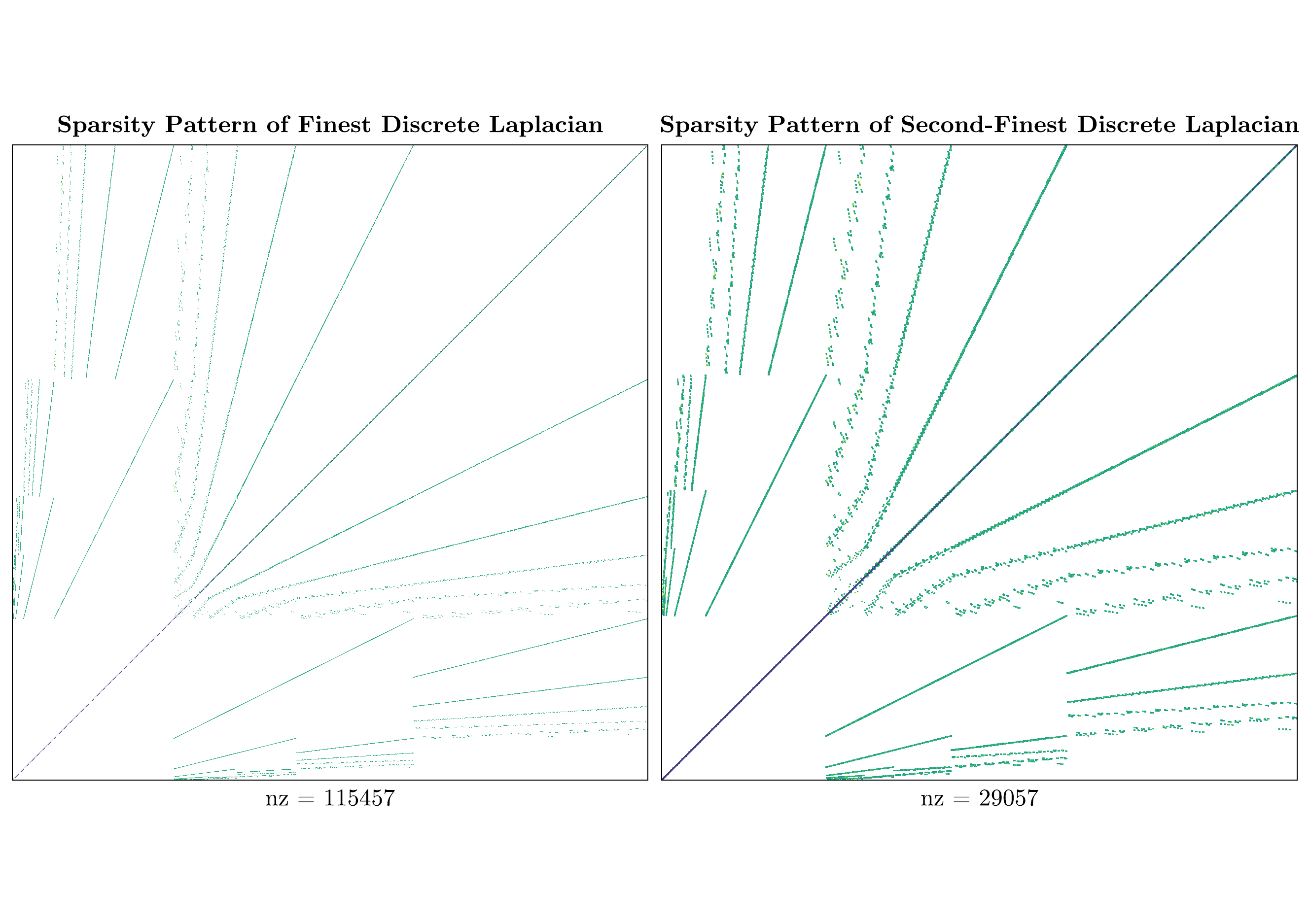}
    \caption{The finest (left) and second-finest (right) Laplacian matrices are shown above. The highest-resolution matrix is $16,641 \times 16,641$ with 115,457 nonzero entries, and the second-highest-resolution matrix is $4,225 \times 4,225$ with 29,057 nonzero entries.}
    \label{fig:hodge0-laplacians}
\end{figure}

The initial temperate $T$ for this problem is a multivariate Gaussian distribution with mean at the center of the domain, and covariance matrix $\frac{1}{2}I$, which is then multiplied by a scaling factor of 400. The resulting distribution has a maximum temperature of approximately 127$^{\circ}C$  at the center of the domain. Boundary conditions are applied which keep the top wall at a value of -100$^{\circ}C$, and the bottom wall at a value of 100$^{\circ}C$. This initial 0-form of temperature is shown in Figure \ref{fig:pc_solution_comparisons}. $g$ is set to 9.81 $m/s^2$, and this is flattened into a 1-form pointing in the negative $y$-direction. The constants $\alpha$ and $\rho_0$ are not explicitly set, but rather their product is defined to be equal to $\frac{1}{g}$. $\phi$ is set to 0.1, $Ra$ to 1000, $k_{\eta f}$ to 1, and $\Delta T$\footnote{This $\Delta T$ denotes the difference in temperature between the top and bottom walls, not the Laplacian of the temperature field.} to 200$^{\circ}C$. The product $\frac{\lambda}{\rho_0 c_p}$ is computed by the product $\frac{1}{Ra} \hspace{0.5mm}  g  \hspace{0.5mm} \alpha  \hspace{0.5mm} \rho_0  \hspace{0.5mm} k_{\eta f}  \hspace{0.5mm} \Delta T  \hspace{0.5mm} y_{max}  \hspace{0.5mm} \frac{1}{\phi}$, where $y_{max}$ is the length of the domain in the $y$-direction. The simulation is executed for an arbitrary $t=0.5$ final time.

\begin{figure}[htbp]
\centering
\includegraphics[width=0.8192\linewidth]{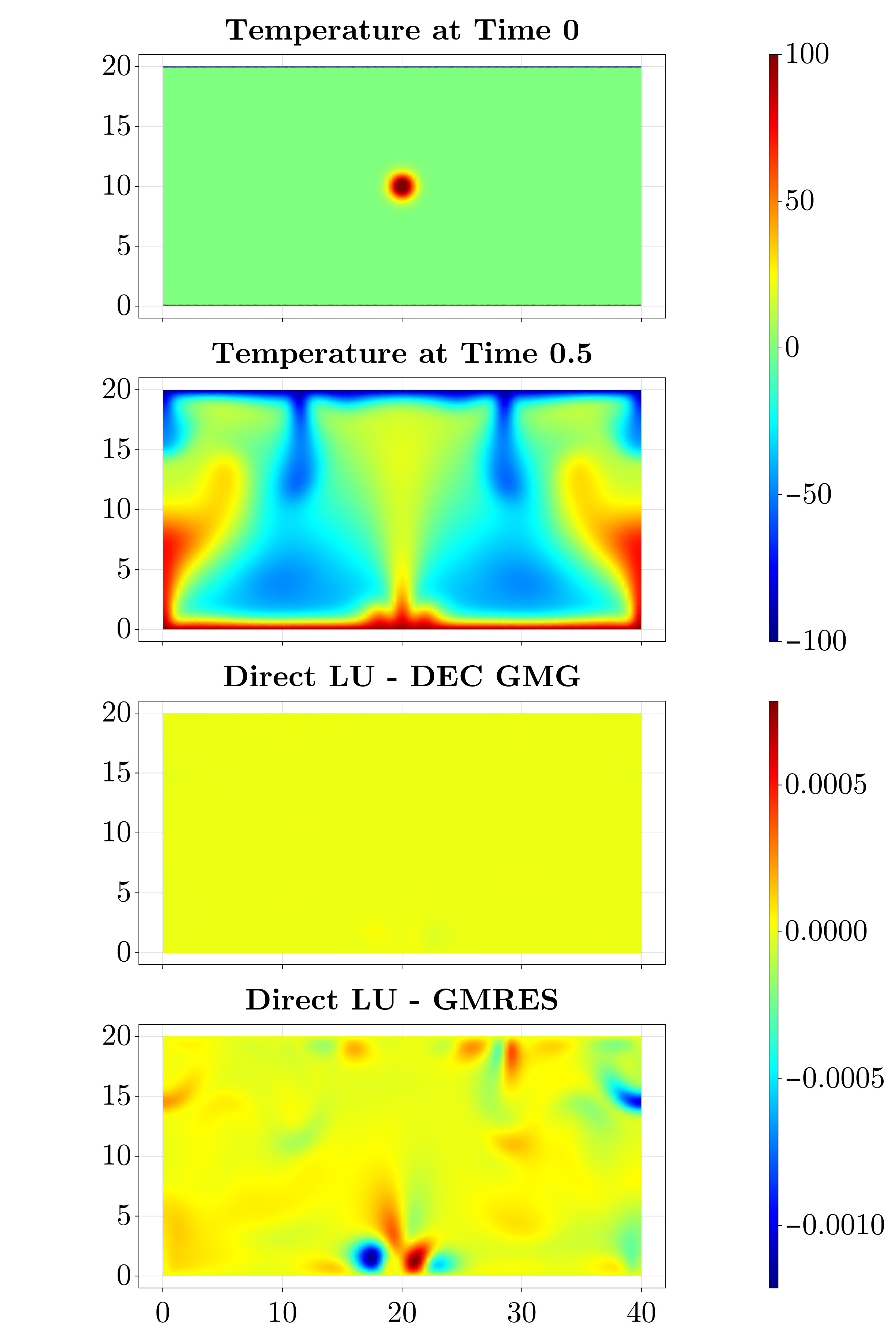}
\caption{The output of the Porous Convection problem using a direct LU solver component is compared against the solution using a DEC GMG solver and a typical GMRES solver. First Plot: The initial conditions of the temperature field. Second Plot: The temperature field at time 0.5. Third Plot: The temperature field at $t=0.5$ of the DEC GMG solution subtracted from that of the reference Direct LU solution. The root mean squared error (RMSE) of the DEC GMG solver compared to the direct LU solution is 1.90e-5. Fourth Plot: The temperature field at $t=0.5$ of the GMRES solution subtracted from that of the reference Direct LU solution. The root mean squared error (RMSE) of the GMRES solver compared to the direct LU solution is 1.04e-4. The top two and bottom two plots share the same colorbar. The initial temperature distribution is a scaled multi-variate Gaussian distribution, with top and bottom boundary conditions of -100 and +100, respectively. The maximum temperature of this initial distribution is approximately 127.}
\label{fig:pc_solution_comparisons}
\end{figure}

We simulate this model using a direct LU solve, GMRES with an ILU0 preconditioner provided by Krylov.jl~\cite{montoison_krylovjl_2023}, and our direct GMG solve using the Gauss-Seidel smoother provided by IterativeSolvers.jl\footnote{\url{https://github.com/JuliaLinearAlgebra/IterativeSolvers.jl}}. These results are shown in Figure \ref{fig:pc_solution_comparisons}. Again, we note that we do not compare against an AMG method, due to the asymmetry of the discrete Laplacian. The simulation using direct LU is treated as the accurate solution due to the direct solver's extremely high accuracy relative to the iterative methods. The GMRES-ILU0 configuration serves as a typical iterative solver setup that may be employed in practice. For (adaptive) timestepping, we employ the explicit \texttt{Tsit5} solver as implemented by \texttt{DifferentialEquations.jl}~\cite{rackauckas_differentialequationsjl_2017}, with the relative tolerance parameter set to $1\text{e-}9$. The adaptive time stepper for all particular configurations happens to choose the same number of total function evaluations (7677) and accepts the same number of evaluations (1275).

\begin{table}[htbp]
\centering
\caption{The DEC GMG solver is paired with a Gauss-Seidel smoother in the porous convection testcase. This is compared against the use of a direct LU solve, and a GMRES solve with an ILU0 preconditioner. The DEC GMG solver for this particular problem is 27.3\% slower but with only 18.2\% the root mean squared error (RMSE).}
\label{tab:pc-comparisons}

\begin{tabular}{crrrr}
Solver   & Runtime (s) & RMSE vs. LU \\\midrule
LU       & 17           & 0.00        \\
GMRES    & 117         & 1.16e-4     \\
DEC GMG  & 167         & 2.54e-6     \\
\end{tabular}
\end{table}

\begin{figure}[hbtp]
    \centering
    \includegraphics[width=\linewidth]{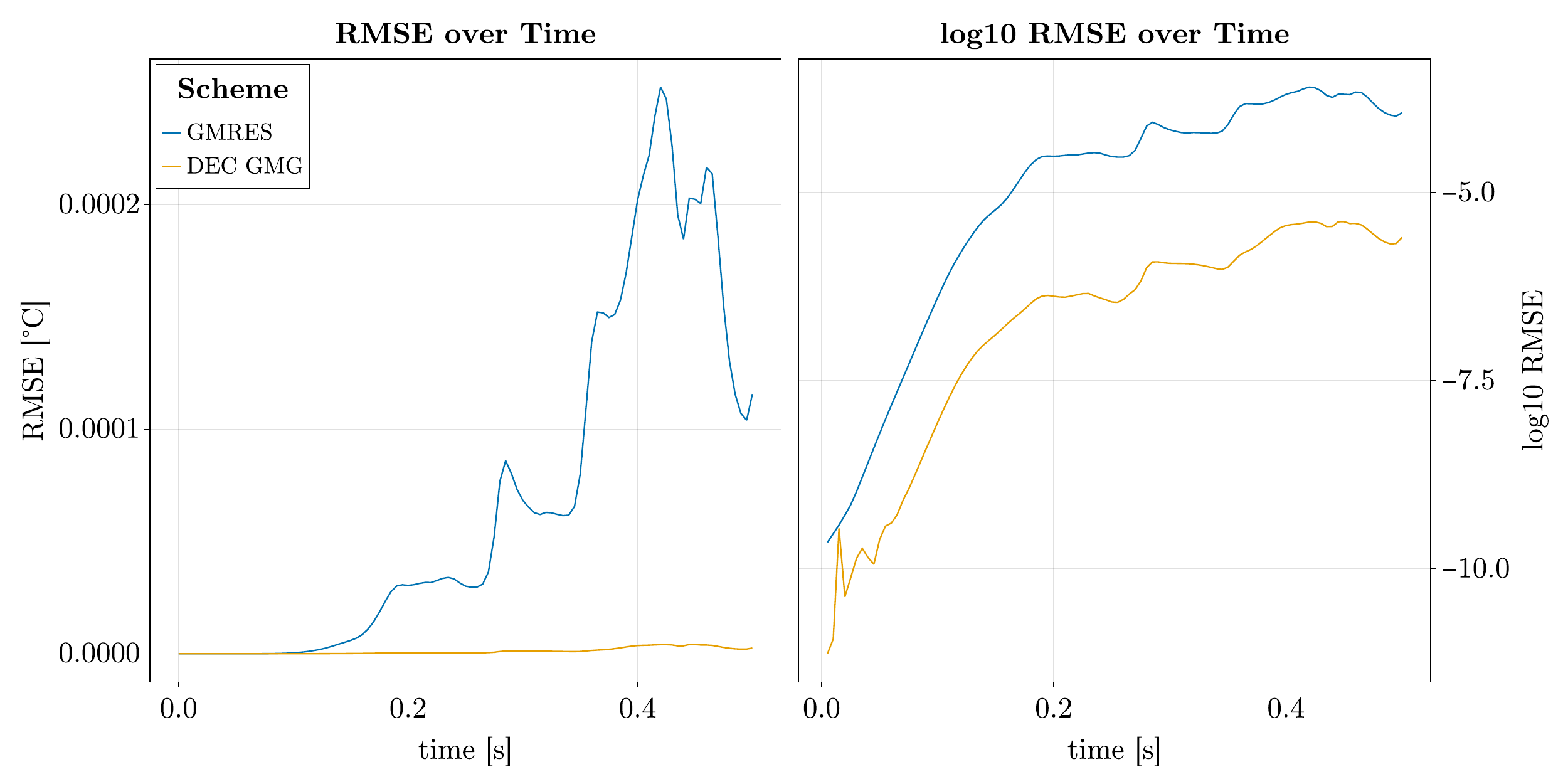}
    \caption{At each time point, the RMSE of the DEC GMG solver's solution and the GMRES solver's solution are computed (both using the direct LU solution as the reference solution). Since the RMSE at time 0 is trivially 0, the value of $-\infty$ is elided from the log plot.}
    \label{fig:rmse_over_time}
\end{figure}

We demonstrate the evolution of the RMSE 
over time for both the GMRES and the DEC GMG solves in Figure \ref{fig:rmse_over_time}.
We can see that, based on Table \ref{tab:pc-comparisons}, the GMG solver exhibits a 97.8\% decrease in RMSE 
compared to the GMRES at $t=0.5$, with an increase in runtime of 42.7\%. This demonstrates that the GMG solver's solution is good enough to remove a satisfactory amount of divergence from the pressure $P$, which would otherwise lead to non-physical behavior, and thus is able to match GMRES with ILU0 in this respect. While the suitable convergence of GMG as a solver was established in the previous section, we have now established that the GMG for this problem is a valid and viable solver for a complex physical problem.

The fact that the DEC is designed around simplicial complexes and focusing on discrete topology and geometry enables smooth integration of geometric multigrid. The modular design of the Decapodes software allows users to define additional discrete differential operators, such as the inverse Laplacian, and use them within complex multiphysics simulations. The software automations provided by Decapodes are based on the geometric nature of the DEC.

\section{Conclusions}

We have presented an integrated physics simulation platform based on the discrete exterior calculus and a novel formulation and implementation of geometric multigrid for structure preserving discretizations that uses geometric information to provide high performance multigrid solvers for the DEC. The category theoretic constructions of general geometric maps were used in this paper to capture the interpolation and restriction maps for multiscale methods, and will serve as the basis of future work in also capturing multidomain methods such as Schwarz domain decomposition.

In a typical computational fluid dynamics problem of porous convection, this DEC GMG solver produces a solution with an 97.8\% decrease in RMSE with a 42.7\% increase in runtime compared to a typical GMRES iterative solver (2.54e-6 and 167 (s) vs. 1.16e-4 and 117 (s)). When comparing solutions of the Poisson problem itself, the DEC GMG solver serves as a useful preconditioner to the conjugate gradients method, producing a solution with a 78.6\% decrease in relative residual with a 57.0\% decrease in solving time (5.55e-10 and 5.98 (s) vs. 2.59e-9 and 13.9 (s)). Because this approach is integrated into the Decapodes multiphysics simulator system, it can be easily applied in many multiphysics models across several domains.

Future work includes the extension and application of these geometric maps to data valued directly on the higher order simplices of a simplicial complex (discrete 1-forms, 2-forms, etc.). One indirect means which this approach currently enables is using the discrete ``sharp" and ``flat" operators to interpolate and restrict discrete $1$-forms component-wise. Other future work includes extending the category theoretic formalism to capture subdivision surfaces in the vein of de Goes et al.~\cite{de_goes_subdivision_2016}.

\section{Acknowledgments}
The authors would like to thank the attendees of the 22nd Copper Mountain Conference On Multigrid Methods for their feedback. We would like to thank Owen Lynch for fruitful discussion.
All authors were supported by the Defense Advanced Research Projects Agency, award number HR00112220038. The authors also thank the developers of Julia software packages used in this project including Krylov.jl, IterativeSmoothers.jl, Makie.jl, and the SciML ecosystem.

\bibliographystyle{siam}
\bibliography{references.bib}

\end{document}